  \providecommand\BibTeX{{%
    \normalfont B\kern-0.5em{\scshape i\kern-0.25em 
    b}\kern-0.8em\TeX}}}
\pgfplotsset{compat=1.3}
\tikzset{loc/.style={draw,rectangle,rounded corners,inner sep=0.5mm},
         tn/.style={draw,rectangle},
         t/.style={->,>=stealth}}
\newenvironment{enumin}[3][.]{\begin{enumerate*}[label=\arabic*),itemjoin={{#2 }},itemjoin*={{#2 #3 }},after={#1}]}{\end{enumerate*}}
\newtheorem{problem}{Problem}
\newtheorem{remark}{Remark}
\newcommand{\x}{\ensuremath{\mathbf{x}}}
\newcommand{\y}{\ensuremath{\mathbf{y}}}
\newcommand{\z}{\ensuremath{\mathbf{z}}}
\newcommand{\w}{\ensuremath{\mathbf{w}}}
\renewcommand{\a}{\ensuremath{\mathbf{a}}}
\renewcommand{\b}{\ensuremath{\mathbf{b}}}
\renewcommand{\v}{\ensuremath{\mathbf{v}}}
\newcommand{\origin}{\ensuremath{\mathbf{0}}}
\newcommand{\norm}[1]{\ensuremath{\Vert #1 \Vert}\xspace}
\newcommand{\dotp}[2]{\ensuremath{\langle #1, #2 \rangle\xspace}}
\newcommand{\expmatrix}{\textsc{ExpMatrix}\xspace}
\newcommand{\maximize}{\textsc{Max}\xspace}
\newcommand{\coordinate}{\textsc{Proj}\xspace}
\newcommand{\absv}{\textsc{Abs}\xspace}
\newcommand{\reachset}{\textsc{Reach}\xspace}
\newcommand{\ball}{\textsc{Ball}\xspace}
\newcommand{\contract}{\textsc{Contract}\xspace}
\newcommand{\False}{\texttt{False}\xspace}
\newcommand{\True}{\texttt{True}\xspace}
\newcommand{\rootn}{\textsc{RootNode}\xspace}
\newcommand{\initialize}{\textsc{InitTree}\xspace}
\newcommand {\I} {\ensuremath{\mathcal{I}}\xspace}
\newcommand {\A} {\ensuremath{\mathcal{A}}\xspace}
\newcommand {\B} {\ensuremath{\mathcal{B}}\xspace}
\renewcommand {\H} {\ensuremath{\mathcal{H}}\xspace}
\newcommand{\N}{\ensuremath{\mathbb{N}}\xspace}
\newcommand{\Npos}{\ensuremath{\mathbb{N}^+}\xspace}
\newcommand{\R}{\ensuremath{\mathbb{R}}\xspace}
\newcommand{\Rnn}{\ensuremath{\mathbb{R}_{\geq 0}}\xspace}
\newcommand {\pset} {\ensuremath{\mathcal{P}}\xspace}
\newcommand {\rset} {\ensuremath{\mathcal{R}}\xspace}
\newcommand{\eps}{\ensuremath{\varepsilon}\xspace}
\newcommand{\ebox}[2][\eps]{\ensuremath{B_{#1}( #2)}\xspace}
\newcommand{\treeset}[1]{\ensuremath{\mathbb{S}(#1)}\xspace}
\newcommand{\constraints}{{\tt{constr}}\xspace}
\newcommand{\vertices}{{\tt{vert}}\xspace}
\newcommand{\chull}{{\tt{chull}}\xspace}
\newcommand {\cpoly} {{\tt{cpoly}}\xspace}
\newcommand{\s}{{\tt{s}}\xspace}
\newcommand{\T}{\ensuremath{\mathcal{T}}\xspace}
\newcommand{\ldha}{\textsc{ldha}\xspace}
\newcommand{\adha}{\textsc{adha}\xspace}
\renewcommand {\H} {\ensuremath{\mathcal{H}}\xspace}
\newcommand {\Q} {\ensuremath{\textit{Q}}}
\newcommand {\E} {\ensuremath{\textit{E}}}
\newcommand {\X} {\ensuremath{\textit{X}}}
\newcommand {\Flow} {\ensuremath{\textit{Flow}}}
\newcommand {\Inv} {\ensuremath{\textit{Inv}}}
\renewcommand {\G} {\ensuremath{\textit{Grd}}}
\newcommand {\V} {\ensuremath{\textit{N}}}
\newcommand{\exec}{{\tt {exec}}\xspace}
\newcommand{\dom}{{\tt {dom}}\xspace}
\renewcommand{\path}{{\tt {paths}}\xspace}
\newcommand{\reach}{{\tt {SReach}}\xspace}
\newcommand{\ts}{{\tt t_s}\xspace}
\newcommand{\fend}{f_\text{end}\xspace}
\newcommand{\lreach}{{\tt {Reach}}\xspace}
\newcommand{\last}[1]{{\tt {last}}(#1)\xspace}
\newcommand{\len}[1]{{\tt {len}}(#1)\xspace}
\newcommand{\modha}{{\tt {Mod}}\xspace}
\newcommand{\children}[1]{{\tt {children}}(#1)\xspace}
\newcommand{\activated}[1]{{\tt {Act}}(#1)\xspace}
\newcommand {\upd} {\ensuremath{\textit{upd}}\xspace}
\newcommand{\follow}[2]{#1 \leadsto #2}
\newcommand{\sync}[2]{#1 \Vert #2}
\newcommand{\pwa}{\textsc{pwa}\xspace}
\newcommand{\tube}{\ensuremath{\mathbb{T}}\xspace}
\newcommand{\rest}[2]{\ensuremath{#1\!\!\downharpoonright_{#2}}}
\newcommand{\Con}{red!30!white}
\newcommand{\Coff}{blue!20!white}
\newcommand{\cI}{red!30!white}
\newcommand{\cII}{blue!20!white}
\newcommand{\cIII}{green!30!white}
\newcommand{\cIV}{yellow!20!white}
\begin{document}

\title{Synthesis of Hybrid Automata with Affine Dynamics from Time-Series Data}

\author{Miriam Garc\'ia Soto}
\affiliation{
 \institution{IST Austria}
 \city{Klosterneuburg}
 \country{Austria}}
\email{miriam.garciasoto@ist.ac.at}
\orcid{0000-0003-2936-5719}

\author{Thomas A. Henzinger}
\affiliation{
 \institution{IST Austria}
 \city{Klosterneuburg}
 \country{Austria}}
\email{tah@ist.ac.at}
\orcid{0000-0002-2985-7724}

\author{Christian Schilling}
\affiliation{
 \institution{University of Konstanz}
 \city{Konstanz}
 \country{Germany}}
\email{christian.schilling@uni-konstanz.de}
\orcid{0000-0003-3658-1065}

\begin{abstract}
Formal design of embedded and cyber-physical systems relies on 
mathematical modeling. In this paper, we consider the model class of hybrid
automata whose  dynamics are defined by affine differential equations.
Given a set of time-series data, we present an algorithmic approach to
synthesize a hybrid automaton exhibiting behavior that is close to the
data, up to a specified precision, and changes in synchrony with the data.
A fundamental problem in our synthesis algorithm is to check membership of
a time series in a hybrid automaton. Our solution integrates reachability and 
optimization techniques for affine dynamical systems to obtain both a
sufficient and a necessary condition for membership, combined in a refinement 
framework.
The algorithm processes one time series at a time and hence can be interrupted,
provide an intermediate result, and be resumed.
We report experimental 
results demonstrating the applicability of our synthesis approach.
\end{abstract}

\keywords{synthesis, hybrid automaton, linear dynamics, membership}

\maketitle

\section{Introduction}

Formal design and verification of embedded control systems
require a mathematical model capturing the dynamics of each component
in the system. In general, embedded systems combine analog and digital
components. The analog components evolve continuously
in real time, while the digital components evolve in discrete time.
An appropriate mathematical formalism for modeling systems with mixed
continuous and discrete behavior is a hybrid automaton~\cite{henzinger00}.

In this paper we propose an automated approach to synthesizing a hybrid 
automaton with affine continuous dynamics (abbreviated \adha) from time-series
data in an \emph{online} fashion.
The design of models from observed data has been extensively studied in control 
theory for autoregressive 
systems~\cite{Verdult04,BemporadGPV05,PaolettiJFV07,vidal08,GarulliPV12}, 
which can be seen as discrete dynamical systems, in contrast to the  
\emph{continuous} dynamics captured by a hybrid automaton. Most of these 
approaches process a single time-series or all data at once.
In a setting where not all data is available at once, it is
desirable to have an \emph{online} approach that processes time-series data
sequentially and iteratively updates a model; only a few approaches
support this feature~\cite{SkeppstedtJM92,VidalA04,HashambhoyV05,SotoHSZ19}.

Our synthesis approach operates in two phases.
In the first phase we transform a (discrete) time-series into a
piecewise continuous trajectory $f$, for which we present
an optimization procedure that allows to specify the error between the data
and the trajectory.
The trajectories $f$ we consider are piecewise-affine (\pwa) functions where
each piece is the
solution of an affine dynamical system of the form $\dot{\x} = A\x + \b$.
\pwa trajectories can model a large class of physical processes and
approximate generic nonlinear systems.

In the second phase, which is independent of how the \emph{continuous} \pwa
trajectory $f$ has been obtained, we synthesize an \adha from~$f$.
More precisely, we construct an \adha from an existing \adha
(initialized with the ``empty'' \adha) in two 
stages: 1)~(membership) we determine whether the new trajectory is already
captured by an
execution of the model, up to a predefined precision, and 2)~(model update) if
the trajectory is
not captured, we modify the model such that, after the modification, the new
model captures the trajectory (and all trajectories that had been captured before).

We propose a three-step algorithm for the \emph{membership} problem (``is a \pwa 
trajectory captured by an \adha?'').
The first step is a reachability analysis inside a tube around the trajectory that we 
use to provide a negative answer. This problem has been studied 
in~\cite{SotoHSZ19} for the class of hybrid automata with 
piecewise-constant dynamics.
The second step is an optimization-based analysis that we use to provide a 
positive answer.
The third step is a refinement procedure to deal with cases when the first 
two steps were not conclusive.

If we find that the \pwa trajectory is not captured by the model in the 
membership query, we apply a \emph{model update} by adding 
behavior
to the automaton.
We first try to relax the continuous constraints of the automaton
(called invariants and guards).
If this relaxation is not sufficient to capture the trajectory, we also apply 
structural changes to the automaton (adding transitions and locations).

In summary, we present algorithms to solve the following problems for \pwa
trajectories and \adha{s} with a given precision:
\begin{itemize}
	\item transforming time-series data to \pwa trajectories (Section~\ref{sec:ts2function})
	
	\item membership of a \pwa trajectory in an \adha (Section~\ref{sec:membership})
	
	\item synthesizing an \adha from \pwa trajectories (Section~\ref{sec:update})
\end{itemize}
Together, our algorithms form an end-to-end approach to the
synthesis of an \adha from time-series data with a given precision.

\paragraph*{Related work}
The synthesis of hybrid systems has been explored previously
in different fields and is known as 
\emph{identification} in the area of control theory (see the surveys~\cite{PaolettiJFV07, GarulliPV12}) and as
\emph{process mining} and \emph{model learning} to a broader research
community.
Most of the techniques focus on input-output models, such as
switched autoregressive exogenous (SARX)~\cite{HashambhoyV05, Ozay16} and
(PWARX) models~\cite{Ferrari01, Ferrari03, Roll04, Nakada05, 
Juloski05, BemporadGPV05}.
SARX models constitute a subclass of linear hybrid automata (which, unlike the \adha, only has dynamics with constant derivatives)
with deterministic switching behavior and PWARX models are piecewise
ARX models where the regressor space forms a state-space polyhedral partition.
The aforementioned methods mainly consider
single-input single-output (SISO) systems, whereas a few of them
consider multiple-input multiple-output (MIMO) systems~\cite{Huang04, Verdult04,vidal08}.
Other techniques identify piecewise affine systems
in state-space form~\cite{Verdult04, Munz05, Alur14}.
The identification techniques can also be
classified into optimization-based methods~\cite{Lauer08,Ozay09}
clustering-based procedures~\cite{Ferrari03,Nakada05} and
algebraic approaches~\cite{vidal08,Nazari16}.
Most of these methods are proposed for offline identification,
with some exceptions~\cite{SkeppstedtJM92, VidalA04, HashambhoyV05}.
We propose an online approach that synthesizes hybrid automata
with affine dynamics, which are systems in state-space form.

In the field of computer science, we find techniques for learning
models from traces, which refers to approaches based on learning finite-state
machines~\cite{Angluin87} or other machine-learning techniques.
Most approaches learn a (simpler) linear hybrid automaton.
The work in~\cite{MedhatRBF15} describes an abstract framework, based on
heuristics, to learn offline from input-output traces by first learning
the discrete structure and later adding continuous dynamics.
Bartocci et al.\ learn \emph{shape expressions}, which have a similar
expressiveness~\cite{BartocciDGMNQ20}.
A recent online approach provides soundness and precision
guarantees~\cite{SotoHSZ19}.
However, that approach is restricted to linear hybrid
automata, i.e., constant dynamics.
We consider affine dynamics and follow a principled search algorithm for the
automaton modification.

We are not aware of approaches that transform time-series data to continuous
affine dynamical systems.
Some approaches consider discrete-time models, such as the work by Willems
for LTI systems~\cite{Willems86a}, and other approaches for SARX models based 
on convex optimization~\cite{OzayLS15} or generalized principal component 
analysis~\cite{vidal08}.

\section{Basic definitions}

\textit{Sets.}
Let $\R$,  $\Rnn$, and $\N$ denote the set of real numbers, non-negative real
numbers, and natural numbers, respectively.
Given a set $X$, the \emph{power set} $\pset(X)$ is the set of all subsets of $X$.
We write $\x$ for points $(x_1 , \ldots , x_n)$ in $\R^n$.
Given a point $\x \in \R^n$ and $\eps \in \Rnn$, we define the
\emph{ball} of radius \eps around $\x$ as
$\ebox{\x} := \{\y \in \R^n : \norm{\x - \y} \leq \eps\}$,
where $\norm{\cdot}$ is the infinity norm.
Given two sets $P, P' \subseteq \R^n$, we define the \emph{distance} between
$P$ and $P'$ as
$d(P,P') := \inf \{ \norm{\x - \y} : \x \in P,\linebreak[1] \y \in P' \}$.
Let $\a \in \R^n$ and $b \in \R$ be constant and $\x$
be a variable in $\R^n$, and let $\dotp{\a}{\x}$ denote the dot product of $\a$ and $\x$; then $\dotp{\a}{\x} \sim b$ is a \emph{linear constraint} 
where $\sim\ \in \{ =, \leq \}$, the set $\{\x : \dotp{\a}{\x} = b\}$ is a
\emph{hyperplane}, and the set $\{\x : \dotp{\a}{\x} \leq b\}$ is a \emph{half-space}.
A \emph{(convex) polytope} is a compact intersection of linear constraints.
Equivalently, a polytope is the convex hull of a set of vertices
$\v_1, \dots, \v_m \in \R^n$, written $\chull(\{\v_1, \dots,\v_m\})$.
For a polytope $P$ we denote the set of its linear constraints by
$\constraints(P)$ and the set of its vertices by $\vertices(P)$.
Let $\cpoly(n)$ be the set of convex polytopes over $\R^n$.

\textit{Trees.}
A \emph{tree} is a directed acyclic graph $\T = (\V,\E)$ with finite set of nodes $\V$, including a \emph{root} node, and edges
$\E \subseteq \V \times \V$.
Given a node $\nu \in \V$, the \emph{child nodes} are
$\children{\nu} = \{ \nu' \in 
\V 
: (\nu, \nu') \in \E \}$.

\textit{Functions, dynamical systems, and trajectories.}
Given a function $f$, let $\dom(f)$ denote its domain. Let $\rest{f}{D}$ denote the restriction of $f$ to domain $D \subseteq \dom(f)$. Given two functions $f$ and $g$ with $\dom(f) = \dom(g)$, the distance between $f$ and $g$ is denoted by $d(f,g)$ and defined as $\max_{t \in \dom(f)} \norm{f(t) - g(t)}$.
We typically have $\dom(f) = [0,T]$, where the \emph{initial and final states} of $f$ correspond to $f(0)$ and $f(T)$
and are denoted by $f_0$ and $\fend$, respectively.
A \emph{time series} is a sampling $s : D \to \R^n$ over a finite time domain $D$.

A function $f: [0,T] \to \R^n$ is a \emph{piecewise-affine (\pwa) trajectory} with $k$ pieces if it is continuous and there is a tuple $(\I, \A, \B)$ where $\I$ is a finite set of consecutive time intervals $[t_0, t_1], \ldots, [t_{k-1}, t_k]$ with $[0,T] = \cup_{1 \leq i \leq k} [t_{i-1}, t_i]$, $\A$ and $\B$ are $k$-tuples of matrices $A_i \in \R^{n \times n}$ and vectors $\b_i \in \R^n$, respectively, $i = 1, \dots, k$, and $\rest{f}{[t_{i-1}, t_i]}$ is a solution of the affine dynamical system $\dot{\x} = A_i\x+\b_i$, where $\dot{\x}$ denotes the derivative of $\x$ with respect to $t$.
We assume that $\pwa$ trajectories are given as the above tuple.
We call $\rest{f}{[t_{i-1}, t_i]}$ the \emph{pieces} of $f$,
and $\ts(f) := \{ t_0, \ldots, t_k \}$
the \emph{switching times} of $f$.
Each piece of $f$ is called an \emph{affine} trajectory.
A \emph{linear trajectory} $f$ is a special case of an affine trajectory where $\b = \origin$.

\subsection{Hybrid automaton with affine dynamics}

We consider a particular class of hybrid automata~\cite{henzinger00}
with invariants and guards given by linear constraints and with continuous dynamics given by affine differential equations.
\begin{definition}\label{def:ha}
	An $n$-dimensional \emph{hybrid automaton with affine dynamics}
	$(\adha)$ is a tuple $\H = (\Q, \E, \X, \Flow, \Inv, \G)$, where
	\begin{enumin}{,}{and}
		\item
		$\Q$ is a finite set of locations	
		\item
		$\E \subseteq \Q \times \Q$ is a transition relation
		\item
		$\X = \R^n$ is the continuous state space
		\item
		$\Flow: \Q \to \R^{n \times n} \times \R^n$ is the injective flow function that returns a matrix $A$ and a vector \b, and we write $\Flow_A(q) \in \R^{n \times n}$ and $\Flow_\b(q) \in \R^n$ to refer to each component
		\item
		$\Inv: \Q \to \cpoly(\R^n)$ is the invariant function
		\item
		$\G: \E \to \cpoly(\R^n)$ is the guard function
	\end{enumin}
\end{definition}

A \emph{path} $\pi$ in $\H$ of length $k$ is a sequence of locations
$q_1, \ldots, q_k$ in $\Q$ such that $(q_i, q_{i+1}) \in \E$ for each
$1 \leq i < k$. We write $\path(\H)$ for the set of paths in $\H$.
Given a path $\pi = q_1, \ldots, q_k$, we define $\len{\pi} = k$ as
the length of $\pi$  and $\last{\pi} = q_k$ as the last location.

Next we define an execution of an \adha, describing
the evolution of the continuous state subject to time passing
and discrete switches.

\begin{definition}\label{def:exec}
	An \emph{execution} $\sigma$ of an \adha $\H$ is a \pwa trajectory
	$\sigma: I \to \R^n$
	such that there is a path $\pi = q_1, \ldots, q_k$
	in $\H$ and a sequence of time points $t_0 = 0, t_1, \ldots, t_k$
	satisfying
\begin{enumin}{,}{and}
		\item
		$I = [t_0, t_k]$
		\item
		$\sigma(t) \in \Inv(q_i)$ for every $1 \leq i \leq k$ and $t \in 
		[t_{i-1}, t_i]$ 
		\item 
		$\sigma(t_i) \in$ $\G(q_i,q_{i+1})$ for every $1 \leq i < k$ 
		\item
		$\dot{\sigma}(t) = \Flow_A(q_i) \cdot \sigma(t) $ $+ 
		\Flow_\b(q_i)$ 
		for every $1 \leq i \leq k$ and $t \in (t_{i-1}, t_i)$
	\end{enumin}
\end{definition}

Thus switches between dynamics are state-dependent.
We call $\ts(\sigma) = \{t_0, \ldots, t_k\}$ the \emph{switching times} of $\sigma$.
We say that $\sigma$ \emph{follows} $\pi$, written $\follow{\sigma}{\pi}$,
and denote the set of executions by $\exec(\H)$.

\section{Problem statement}\label{sec:problem}

Our overall goal is to synthesize a hybrid automaton from data, given in the form of time series, such that the synthesized automaton captures the dynamical behavior of the data up to a given precision.
We split up this problem into two phases.
In the first phase, given a time series $s$ and a value
$\delta \in \Rnn$, we find a \pwa trajectory $f$ that is $\delta$-close
to all points in $s$.

\begin{definition}
	Given a time series $s$ with domain $D \subseteq [0,T]$,
	a \pwa trajectory $f$ with $\dom(f) = [0,T]$,
	and a value $\delta \in \Rnn$, we say that
	$f$ \emph{$\delta$-captures} $s$ if
	$\norm{s(t) - f(t)} \leq \delta$ for each $t \in D$.
\end{definition}

In the second phase, given another value $\eps \in \Rnn$, we construct
a hybrid automaton from this \pwa trajectory.

\begin{definition}
	Given a \pwa trajectory $f$ and a value $\eps \in \Rnn$, we say that
	an \adha $\H$ \emph{\eps-captures} $f$ if there exists an execution
	$\sigma \in \exec(\H)$ such that $d(f,\sigma) \leq \eps$.
\end{definition}

The definition extends to a set $F$ of piecewise-affine trajectories, i.e., $\H$ \eps-captures $F$ if $\H$ \eps-captures each $f$ in $F$.
A possible problem to consider is:
\textit{Given a set of \pwa trajectories $F$ and $\eps \in \Rnn$, construct an \adha $\H$ such that $\H$ \eps-captures $F$.}
The construction of a universal automaton, describing every possible behavior, trivially satisfies the constraint but is not a useful model. Our goal is to construct a model with a \emph{reasonable} amount of behavior by introducing a minimality criterion that we formally discuss later.

\begin{problem}[Synthesis]\label{problem:synthesis}
Given a set of \pwa trajectories $F$ and $\eps \in \Rnn$, construct an \adha $\H$ such that $\H$ \eps-captures $F$ and satisfies a minimality criterion.
\end{problem}

We propose an approach that processes one trajectory $f$ in $F$ at a time and proceeds in two stages.
Given a hybrid automaton $\H$ and a \pwa trajectory $f$, in the first stage we check whether $\H$ \eps-captures $f$, which we call a \emph{membership query}.
In the second stage, if $f$ is not \eps-captured, we modify $\H$ such that it \eps-captures $f$.
This modification may consist of several changes to the model: increasing the invariants and guards, adding new transitions, and adding new locations. We prioritize the modifications in the order given above to minimize the number of locations.

In the next three sections we present algorithmic approaches to transforming time series to \pwa trajectories, solving the membership query, and performing the model update.

\section{From time series to \pwa trajectory}\label{sec:ts2function}

In the first phase of our algorithmic framework we construct a \pwa trajectory from a time series $s$.
Recall that $f$ is supposed to be the solution of a piecewise-affine dynamical 
system, i.e., of a sequence of contiguous solutions of systems of the form 
$\dot{\x}(t) = A_i\x(t) + \b_i$ with $\x(0) = \x_0$.
We simplify the problem of finding $f$ by only considering switching times of $f$ from the domain of $s$.

We thus need to solve the following simpler problem.
Given a time series $s$ with domain $D$ and a value $\delta \in \Rnn$, find an 
affine dynamical system $\dot{\x}(t) = A\x(t) + \b$ and an initial state $\x(0) = 
\x_0$ such that the solution $g$ satisfies $\Vert s(t) - g(t) \Vert \leq \delta$ for 
every $t \in D$, or determine that no such system exists.
We pose the problem of finding $g$ as a parameter identification problem where the parameters are the coefficients of $A$, \b, and $\x_0$.
This can be written as a query to an optimization tool in combination with an ODE 
solver (we refer to Section~\ref{sec:evaluation} for implementation details).
Given concrete parameter values, i.e., instances of $A$, \b, and $\x_0$, the ODE solver can compute the solution $g$ corresponding to the affine dynamical system.
We can hence evaluate the norm $\Vert s(t) - g(t) \Vert$ at all time points $t \in D$.
The optimization tool thus has to find a solution $g$ such that this norm at those time points is less than $\delta$.

We can use the above algorithm for solving the original problem of finding a 
\pwa trajectory.
The main idea is to maximize the duration in which we can use the same dynamics.
Denote the time points of $s$ by $t_0 < \dots < t_k$.
We first find the maximum time point $t_i$ such that the above-described algorithm finds a solution (e.g., using binary search).
Then we iteratively solve the same problem for the time-series suffix from $t_i$ to $t_k$, until finally $t_i = t_k$.
Note that we only need to identify $\x_0$ for the first piece, as for subsequent pieces the initial state is determined by $\x_0$ and the previous dynamics.

\section{Membership query}\label{sec:membership}

In this section we formalize and solve the membership query. Given an \adha $\H$, a \pwa trajectory $f$, and a value $\eps \in \Rnn$, the fundamental problem we need to solve is to determine if $\H$ \eps-captures $f$.
We reduce this problem to checking whether for a given \pwa trajectory $f$ and a given path $\pi$ in $\H$ there exists an execution $\sigma$ following $\pi$ such that $d(f,\sigma) \leq \eps$.
We apply this check to every path $\pi$ in $\H$ of length equal to the number of pieces in $f$. We provide a solution by restricting $f$ and $\sigma$ to switch synchronously,
which allows us to evaluate the pieces consecutively.

\begin{definition}
	An execution $\sigma$ of an \adha $\H$ is \emph{synchronized} with a \pwa trajectory $f$, denoted by $\sync{\sigma}{f}$, if $\dom(\sigma) = \dom(f)$ and $\ts(\sigma) =\ts(f)$.
\end{definition}

\begin{problem}[Membership]\label{pb:syncmem}
	Given a path $\pi$ in an \adha $\H$, a \pwa trajectory $f$, and $\eps \in \Rnn$, determine if there exists a synchronized execution $\sigma$ of $\H$ with $\follow{\sigma}{\pi}$ and $d(f,\sigma) \leq \eps$.
\end{problem}

Our membership algorithm uses reachability analysis to approximate the states that the synchronized executions of $\H$ can reach.

\begin{definition}
	Given an $n$-dimensional \pwa trajectory $f$ and $\eps \in \Rnn$, an \emph{\eps-tube} of $f$ is the function $\tube(f,\eps): \Rnn \to \pset(\R^n)$ such that $\tube(f,\eps)(t) = \ebox{f(t)}$.
\end{definition}

\begin{definition}
	Given an \adha $\H$, a path $\pi \in \path(\H)$, a \pwa trajectory $f$, and $\eps \in \Rnn$, the \emph{synchronized reachable set}, starting from a set $P \subseteq \R^n$ and following $\pi$, is defined as
	\begin{align*}
	\reach(P,\pi,f,\eps) := \{ \x \in \R^n : \exists \sigma \in \exec(\H), 
	\follow{\sigma}{\pi}, \\\sync{\sigma}{f},
	\sigma_0 \in P, \sigma(t) \in \tube(f,\eps)(t) \, \forall t \in \dom(\sigma) \text{ and } \sigma_{end} = \x \}.
	\end{align*}
\end{definition}

For Problem~\ref{pb:syncmem}, an execution in $\H$ satisfying the corresponding constraints exists if $\reach(P,\pi,f,\eps)$ is nonempty.
Note that the converse is not true due to unsynchronized executions.

\begin{proposition}
	Let $\H$ be an \adha, $f$ be a \pwa trajectory, $P \subseteq \R^n$, and $\eps \in \Rnn$. If $\reach(P,\pi,f,\eps)$ is nonempty for some $\pi \in \path(\H)$, then $\H$ \eps-captures $f$.
\end{proposition}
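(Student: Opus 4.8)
The plan is to unfold the definition of the synchronized reachable set and observe that it directly supplies the witness execution required by the definition of $\eps$-capture. First I would use the hypothesis: there is a path $\pi \in \path(\H)$ with $\reach(P,\pi,f,\eps) \neq \emptyset$, so I pick an arbitrary point $\x \in \reach(P,\pi,f,\eps)$. By the definition of $\reach$, this point comes together with an execution $\sigma \in \exec(\H)$ satisfying $\follow{\sigma}{\pi}$, $\sync{\sigma}{f}$, $\sigma_0 \in P$, $\sigma(t) \in \tube(f,\eps)(t)$ for every $t \in \dom(\sigma)$, and $\sigma_{end} = \x$. Only the properties $\sigma \in \exec(\H)$, $\sync{\sigma}{f}$, and the tube-containment condition will be needed; the role of $P$ and of $\x$ is irrelevant for this direction.

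Next I would translate the tube-containment condition into a bound on $d(f,\sigma)$. Since $\sync{\sigma}{f}$, we have $\dom(\sigma) = \dom(f)$, so the quantity $d(f,\sigma) = \max_{t \in \dom(f)} \norm{f(t) - \sigma(t)}$ is well-defined. By the definition of the $\eps$-tube, $\sigma(t) \in \tube(f,\eps)(t) = \ebox{f(t)}$ is by construction equivalent to $\norm{f(t) - \sigma(t)} \le \eps$, and this holds for every $t \in \dom(f)$. Taking the maximum over $t$ yields $d(f,\sigma) \le \eps$.

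Finally, $\sigma$ is an execution of $\H$ with $d(f,\sigma) \le \eps$, which is exactly the witness demanded by the definition of ``$\H$ $\eps$-captures $f$''; the claim follows. I do not expect any real obstacle here: the statement is an immediate consequence of unfolding definitions, and the only point needing (trivial) care is that $\dom(\sigma) = \dom(f)$, which is guaranteed by synchronization and is what makes $d(f,\sigma)$ meaningful. It is worth noting, as the text already remarks, that the converse implication fails, since an $\eps$-capturing execution of $\H$ need not be synchronized with $f$; but only the stated direction is claimed.
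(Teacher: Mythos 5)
Your proof is correct and matches the paper's (implicit) reasoning exactly: the paper states this proposition without a written proof, treating it as an immediate consequence of unfolding the definition of \reach, which is precisely what you do. The witness execution, the use of synchronization to make $d(f,\sigma)$ well-defined, and the translation of tube containment into $\norm{f(t)-\sigma(t)}\leq\eps$ are all the intended steps.
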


We inductively construct the synchronized reachable set for a \pwa trajectory $f$ by computing the synchronized reachable set for each affine piece of $f$. Concretely, given an initial set $P$, a path $\pi = q_1, \ldots, q_k$ in $\H$, and a \pwa trajectory $f$ with $\ts(f) = t_0, \ldots, t_k$,
we define the synchronized reachable sets
\begin{equation}\label{eq:syncset}
P_0 := P, \hspace{2mm}
P_i := \reach(P_{i-1}, q_i, \rest{f}{[t_{i-1}, t_i]},\eps) \text{ for } 1 \leq i 
\leq k.
\end{equation} 
Observe that $P_k$ is equal to $\reach(P,\pi,f,\eps)$.

\subsection{Membership query for single trajectories}

We now present a method to approximate the synchronized reachable set for a \pwa trajectory $f$ with just one piece, starting from a polytope $P$ and following a path $q$ of length one in $\H$, that is, $\reach(P,q,f,\eps)$.
This is a special case of Problem~\ref{pb:syncmem} where $f$ is an affine trajectory and the path $\pi$ in $\H$ is a single location $q$.
As observed before, checking emptiness of the synchronized reachable set is equivalent to checking whether there exists of an affine trajectory $\sigma$ in the \eps-tube of $f$, starting from the given polytope $P$, with the same time domain as $f$, and following the dynamics of~$q$.

\begin{remark}
	Without loss of generality we restrict ourselves to linear dynamics,
	which are equivalent to affine dynamics under an appropriate
	transformation: Add an extra variable $y$ to an affine system $\dot{\x} = 
	A\x+\b$ as $\dot{\x} = A\x+\b y$ where $y$ 
	is constant $1$ (i.e., $\dot{y} = 0$).
	Hence we also consider hybrid automata with linear dynamics (\ldha), which 
	means that the flow function has the signature $\Flow: \Q \to \R^{n \times n}$.
\end{remark}

\begin{figure}
	\centering
	\begin{subfigure}[b]{.64\linewidth}
		\centering
		\begin{tikzpicture}
	\begin{axis}[width=63mm,height=50mm,domain=0:3.14,legend style={matrix anchor=west,at={(12mm,8mm)}}]
		\addplot[black] {sin(deg(x))};
			\addlegendentry{$f$}
		\addplot[yellow!80!black,thick,dashed] {0.5*sin(deg(1.6*x-1)) + 0.49};
			\addlegendentry{$\sigma$}
		\addplot[red,opacity=0.5,name path=lower_bottom,forget plot] {0.5*sin(deg(1.6*x-1)) + 0.325};
		\addplot[red,opacity=0.5,name path=upper_top,forget plot] {0.5*sin(deg(1.6*x-1)) + 0.52};
		\addplot[fill=red,opacity=0.5] fill between[
			of = lower_bottom and upper_top
		];
			\addlegendentry{\eps-tube}
		\addplot[gray,opacity=0.5,name path=lower,forget plot] {sin(deg(x)) - 0.1};
		\addplot[gray,opacity=0.5,name path=upper,forget plot] {sin(deg(x)) + 0.1};
		\addplot[fill=gray,opacity=0.5] fill between[
			of = lower and upper
		];
			\addlegendentry{\eps-tube}
		\addplot[black] {sin(deg(x))};
		\addplot[yellow,thick,dashed] {0.5*sin(deg(1.6*x-1)) + 0.49};
	\end{axis}
\end{tikzpicture}
		\caption{An affine trajectory $f$ (black) and the
		\eps-tube around $f$ (gray).
		The light red tube consists of all possible executions $\sigma$
		following some other affine dynamics
		emerging from $\tube(f,\eps)(0)$.
		The execution $\sigma$ (yellow) always stays inside the gray tube.}
		\label{fig:tube_example}
	\end{subfigure}
	\hfill
	\begin{subfigure}[b]{.33\linewidth}
		\centering
		\includegraphics[width=\textwidth,keepaspectratio,clip,trim=4cm 0mm 4cm 0mm]{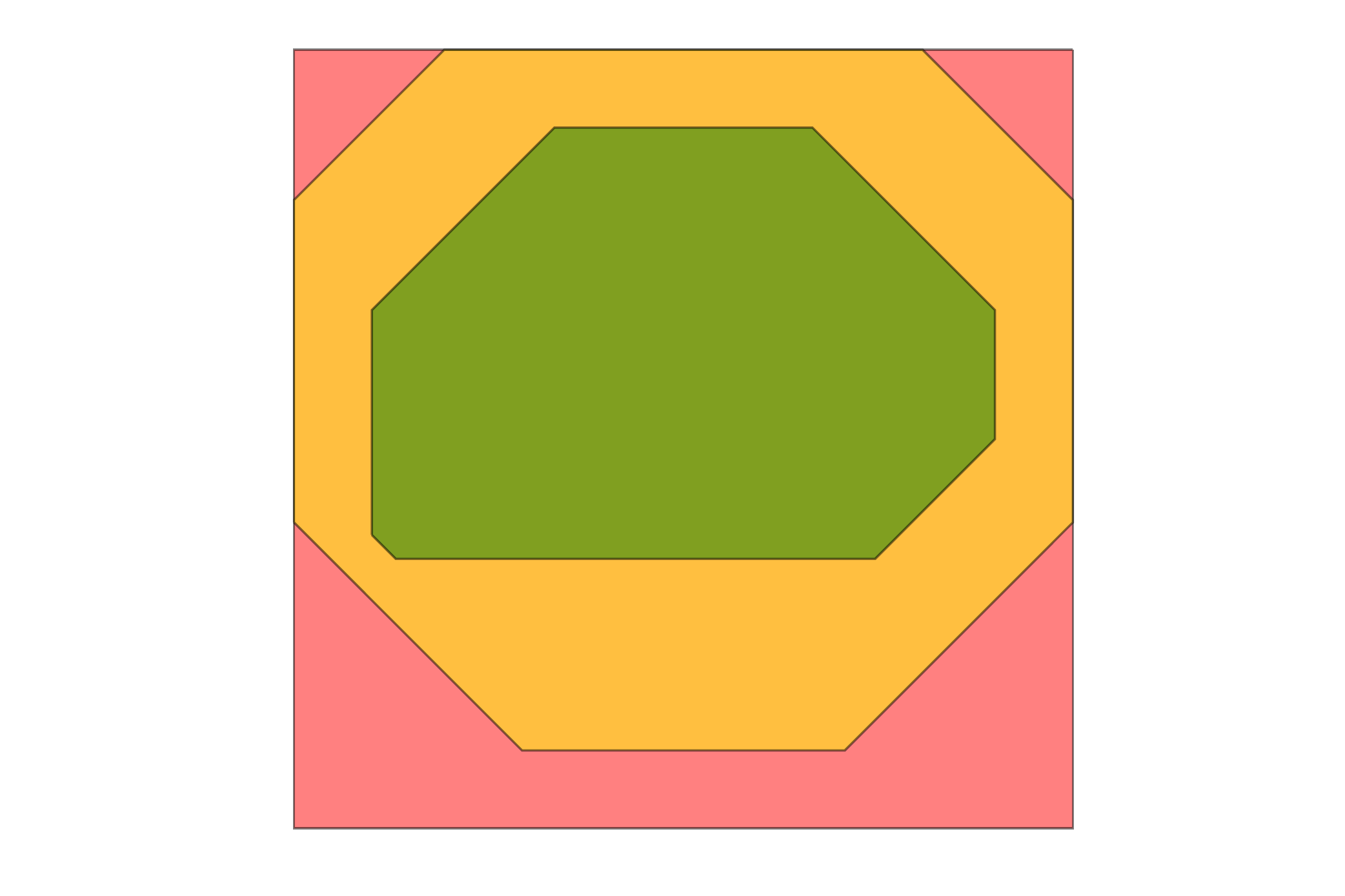}
		\caption{Tube partition at a point in time.
		Green: under-approximation of states whose executions stay inside.
		Red: over-approximation of states whose executions eventually leave.
		Yellow: undecided.}
		\label{fig:sync_reach_set}
	\end{subfigure}
	\caption{Illustration of reachability computations.}
\end{figure}

Figure~\ref{fig:tube_example} illustrates that computing the exact
synchronized reachable set is not trivial.
Hence we settle for an approximate solution by successive
polytope refinements into three regions,
corresponding to the respective executions emerging from those regions,
as illustrated in Figure~\ref{fig:sync_reach_set}:
an under-approximation of the states in $P$ whose executions definitely
stay inside the tube,
an over-approximation of the states in $P$ whose executions definitely
leave the tube,
and the remaining states that are undetermined.
In summary, we want to achieve the following goals:
\begin{enumerate}[label=(G\arabic*),leftmargin=3.4\parindent]
	\item \label{itm:g1}
	determine whether $\reach(P,q,f,\eps)$ is empty,
	\item \label{itm:g2}
	(approximately) compute $\reach(P,q,f,\eps)$, and
	\item \label{itm:g3}
	refine the polytope $P$ to improve the approximation.
\end{enumerate}
We next discuss in detail how to achieve these goals.

\subsection{Emptiness of SReach}\label{sec:emptiness}

We now work toward an algorithm for achieving goal~\ref{itm:g1}.
As argued before, solving the emptiness problem exactly
is not trivial.
A sufficient condition is to compute an \emph{over-approximation} and show
emptiness for that set.
$\reach(P,q,f,\eps)$ is empty if and only if there exists
a time point $t \in [0,T]$ such that $\reach(P, q, {\rest{f}{[0,t]}}, \eps)$
is empty.
We can generalize this observation to sets of points $P' \subseteq P$.
Observe that
$\reach(P,q,f,\eps) = \reach(P',q,f,\eps) \cup \reach(P \setminus P', q, f,\eps)$,
so if for each point $\x$ in $P'$ there exists a time point $t$
such that the execution emerging from $\x$ leaves the tube,
we can remove the set $P'$ from $P$.
We recall a classic result.

\begin{definition}
	The \emph{reachable region} from $P \subseteq \R^n$ following the linear
	dynamics described by $A \in \R^{n \times n} $ at time $t$ is defined as
	$\lreach(P,A,t) := \{ e^{At} \cdot \x : \x \in P \}$.
\end{definition}

With $A = \Flow(q) \in \R^{n \times n}$, we know that $\lreach(P,A,t)$
includes the points of all executions $\sigma$ at time $t$ such that
$\follow{\sigma}{q}$ starting from $\x_0 \in P$.
Moreover, $\sigma(t)$ belongs to the \eps-tube around $f$ at time $t$.
Therefore, $\lreach(P,A,t) \cap \tube(f,\eps)(t)$
is an \emph{over-approximation} of $\reach(P,q,\rest{f}{[0,t]},\eps)$, providing a sufficient emptiness check.

\begin{proposition}\label{prop:necessary}
	Emptiness of $\lreach(P,A,t) \cap \tube(f,\eps)(t)$ implies
	emptiness of $\reach(P,q,{\rest{f}{[0,t]}},\eps)$, which
	implies emptiness of $\reach(P,q,f,\eps)$.
\end{proposition}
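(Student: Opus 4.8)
The plan is to establish the two implications in the chain separately. For the first, that emptiness of $\lreach(P,A,t)\cap\tube(f,\eps)(t)$ entails emptiness of $\reach(P,q,\rest{f}{[0,t]},\eps)$, I would make precise the over-approximation claim sketched just above, namely the set inclusion $\reach(P,q,\rest{f}{[0,t]},\eps)\subseteq\lreach(P,A,t)\cap\tube(f,\eps)(t)$; the emptiness implication is then immediate. For the second, that emptiness of $\reach(P,q,\rest{f}{[0,t]},\eps)$ entails emptiness of $\reach(P,q,f,\eps)$, I would argue by contraposition: any execution witnessing $\reach(P,q,f,\eps)\neq\emptyset$ restricts to one witnessing $\reach(P,q,\rest{f}{[0,t]},\eps)\neq\emptyset$.

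For the inclusion, take $\x\in\reach(P,q,\rest{f}{[0,t]},\eps)$. By definition there is an execution $\sigma$ with $\follow{\sigma}{q}$, $\sync{\sigma}{\rest{f}{[0,t]}}$, $\sigma_0\in P$, $\sigma(s)\in\tube(f,\eps)(s)$ for all $s\in\dom(\sigma)=[0,t]$, and $\sigma_{end}=\x$. Since $q$ is a single location with linear flow $A=\Flow(q)$, the flow equation of Definition~\ref{def:exec} gives $\dot\sigma(s)=A\,\sigma(s)$ on $(0,t)$, and by continuity of $\sigma$ this extends to $[0,t]$, whose unique solution with initial value $\sigma_0$ is $\sigma(s)=e^{As}\sigma_0$. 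Hence $\x=\sigma(t)=e^{At}\sigma_0\in\lreach(P,A,t)$, and simultaneously $\x=\sigma(t)\in\tube(f,\eps)(t)$, so $\x$ lies in the intersection. Therefore emptiness of $\lreach(P,A,t)\cap\tube(f,\eps)(t)$ forces emptiness of $\reach(P,q,\rest{f}{[0,t]},\eps)$.

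For the second implication, suppose $\reach(P,q,f,\eps)\neq\emptyset$ with $\dom(f)=[0,T]$ and $t\le T$, witnessed by an execution $\sigma$, and set $\sigma':=\rest{\sigma}{[0,t]}$. The crux is to verify that $\sigma'$ is again an execution of $\H$ following $q$ and synchronized with $\rest{f}{[0,t]}$: it is an affine trajectory on $[0,t]$ because it is the restriction of one; the path is still the single location $q$, so the guard clause of Definition~\ref{def:exec} is vacuous; the invariant and flow clauses are inherited from $\sigma$ on the subinterval; and since both $f$ and $\sigma$ have a single piece, $\dom(\sigma')=[0,t]=\dom(\rest{f}{[0,t]})$ and $\ts(\sigma')=\{0,t\}=\ts(\rest{f}{[0,t]})$, giving $\sync{\sigma'}{\rest{f}{[0,t]}}$. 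Moreover $\sigma'_0=\sigma_0\in P$ and $\sigma'(s)=\sigma(s)\in\tube(f,\eps)(s)=\tube(\rest{f}{[0,t]},\eps)(s)$ for $s\in[0,t]$, so $\sigma'_{end}=\sigma(t)\in\reach(P,q,\rest{f}{[0,t]},\eps)$, and that set is nonempty. Contraposition, together with chaining the two implications, yields the proposition.

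The argument is essentially bookkeeping against the definitions, so I expect no deep obstacle. The two points that need care are: in the first implication, using continuity of $\sigma$ to extend the flow identity from the open interval $(0,t)$, where Definition~\ref{def:exec} prescribes it, to the closed interval, so that the matrix-exponential form is valid at the endpoint $t$; and in the second, checking that restricting to $[0,t]$ preserves all four clauses of Definition~\ref{def:exec} and, in particular, does not break synchronization, which here holds because only single-piece trajectories are involved and their switching-time sets reduce to the two endpoints.
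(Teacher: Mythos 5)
Your proof is correct and follows essentially the same route as the paper, which justifies the first implication by the over-approximation inclusion $\reach(P,q,\rest{f}{[0,t]},\eps)\subseteq\lreach(P,A,t)\cap\tube(f,\eps)(t)$ in the prose immediately preceding the proposition, and the second by the earlier observation that an execution witnessing nonemptiness over $[0,T]$ restricts to one over $[0,t]$. Your version merely makes explicit the bookkeeping (matrix-exponential form of $\sigma$, preservation of the four execution clauses under restriction) that the paper leaves informal.
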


Proposition~\ref{prop:necessary} suggests an algorithm for showing emptiness of
$\lreach(P,A,t) \cap \tube(f,\eps)(t)$ at sampled time points $t \in [0,T]$.
Observe that a finer time sampling provides a more accurate approximation,
and a better chance to show emptiness if $\reach(P,q,f,\eps) = \emptyset$.
For a uniform sequence of $m$ time points of delay $\delta$, 
Algorithm~\ref{alg:linear_necessary} performs the above sufficient check 
numerically.
Recall that $A$ is the flow of location $q$, $\Flow(q)$, and the linear trajectory 
$f$ is given as the tuple $(\{ [0,T] \}, \{ B \})$.
Algorithm~\ref{alg:linear_necessary} takes as input two matrices $A,B \in \R^{n 
\times n}$, a point $\x_0 \in \R^n$, a polytope $P_0 \subseteq \R^n$, two 
values $\eps, T \in \Rnn$, and a natural number $m > 0$.
For the $j$-th time step $\delta = T/m$, the algorithm computes $e^{Bj\delta} 
\cdot \x_0$, where $e^{Bj\delta}$ is obtained with the function 
$\expmatrix(Bj\delta)$.
Then $\ball(\x,\eps)$ constructs the ball $\ebox{\x}$, and $\reachset(P_{j-1}, A, 
\delta)$ computes the set $\lreach(P_{j-1}, A, \delta)$, which is intersected with 
the ball for constructing $P_j$.

\begin{algorithm}[t]
	\caption{Over-approximation of $\reach$}
	\label{alg:linear_necessary}
	\begin{algorithmic}[1]
		\REQUIRE Matrices $A,B \in \R^{n \times n}$, a point $\x_0 \in \R^n$,
		a polytope $P_0$, values $\eps, T \in \Rnn$, and a number $m \in \Npos$.
		
		\STATE $\delta$ := $T$ / $m$  \COMMENT{uniform time step for sampling}
		
		\FOR {$j \in [1, \dots, m]$}
		
			\STATE $\x$ := $\expmatrix(B j \delta) \cdot \x_0$
			\STATE $B_j$ := $\ball(\x,\eps)$  \COMMENT{tube at time point $j\delta$}
			
			\STATE $P_j$ := $\reachset(P_{j-1}, A, \delta) \cap B_j$
			
			\IF{isempty($P_j$)}  \label{line:emptiness_check}
			
				\RETURN{$P_j$}
			
			\ENDIF
		
		\ENDFOR
		
		\RETURN{$P_m$}
	\end{algorithmic}
\end{algorithm}

\begin{proposition}[Soundness]
	Algorithm~\ref{alg:linear_necessary} returns an empty set only if
	$\reach(P,q,f,\eps)$ is empty.
\end{proposition}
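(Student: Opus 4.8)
The plan is to show that each set $P_j$ computed by Algorithm~\ref{alg:linear_necessary} is an over-approximation of the synchronized reachable set sampled at time $j\delta$, and then read off soundness from the fact that the algorithm reports ``empty'' only when one of these over-approximations is empty. Write $P := P_0$ for the input polytope. Two observations line up the quantities in the loop with the definitions: since the linear trajectory $f$ is supplied as $(\{[0,T]\},\{B\})$ with $f(0) = \x_0$, its value is $f(t) = e^{Bt}\x_0$, so the ball $B_j = \ball(e^{Bj\delta}\x_0,\eps)$ equals $\tube(f,\eps)(j\delta)$; and $A = \Flow(q)$ is the matrix governing every execution that follows the length-one path $q$, so $\reachset(P_{j-1},A,\delta) = \lreach(P_{j-1},A,\delta) = e^{A\delta}P_{j-1}$.

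The core step is an induction on $j \in \{0,\dots,m\}$ establishing the loop invariant $\reach(P,q,\rest{f}{[0,j\delta]},\eps) \subseteq P_j$. The base case $j=0$ holds since $P_0 = P$. For the inductive step I would take a point $\x$ in $\reach(P,q,\rest{f}{[0,j\delta]},\eps)$, witnessed by an execution $\sigma$ with $\follow{\sigma}{q}$, $\sigma(0)\in P$, $\sigma(t)\in\tube(f,\eps)(t)$ for all $t\in[0,j\delta]$, and $\sigma(j\delta)=\x$. Restricting $\sigma$ to $[0,(j-1)\delta]$ is again a witness of the same form, so $\sigma((j-1)\delta)\in\reach(P,q,\rest{f}{[0,(j-1)\delta]},\eps)\subseteq P_{j-1}$ by the induction hypothesis. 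Since $\dot\sigma = A\sigma$ on the interior of the last interval, $\x = \sigma(j\delta) = e^{A\delta}\sigma((j-1)\delta) \in e^{A\delta}P_{j-1} = \reachset(P_{j-1},A,\delta)$; and $\x = \sigma(j\delta) \in \tube(f,\eps)(j\delta) = B_j$. Hence $\x \in \reachset(P_{j-1},A,\delta)\cap B_j = P_j$, which closes the induction.

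To conclude, suppose the algorithm returns an empty set: it returns either $P_j = \emptyset$ for some $j \le m$ (via the check on line~\ref{line:emptiness_check}) or $P_m = \emptyset$. The invariant then yields $\reach(P,q,\rest{f}{[0,j\delta]},\eps) = \emptyset$ for the relevant $j$. If $j = m$, then $j\delta = T$ and $\rest{f}{[0,T]} = f$, so $\reach(P,q,f,\eps) = \emptyset$ directly. If $j < m$, then any execution following $q$ from $P$ and staying in the $\eps$-tube on all of $[0,T]$ would restrict to a witness of nonemptiness of $\reach(P,q,\rest{f}{[0,j\delta]},\eps)$, which is impossible; this is exactly the prefix-emptiness implication underlying Proposition~\ref{prop:necessary}, and it gives $\reach(P,q,f,\eps) = \emptyset$.

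I do not anticipate a real obstacle; the one point that warrants care is that the tube is only sampled at the finitely many times $\delta,2\delta,\dots$. A priori this means each $P_j$ is merely an \emph{over}-approximation and not the exact synchronized reachable set, but that is all the one-directional ``returns empty only if'' claim requires: the continuous containment $\sigma(t)\in\tube(f,\eps)(t)$ demanded in the definition of $\reach$ implies containment at every sampled time, so no endpoint actually reachable by a synchronized execution is ever dropped by the intersections in the loop. (The converse — nonempty $P_m$ implying that $\reach(P,q,f,\eps)$ is nonempty — is false, which is why the proposition is stated one-sided.)
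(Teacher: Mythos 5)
Your proof is correct and follows essentially the same route as the paper's: the paper picks a single witness point of a nonempty $\reach(P,q,f,\eps)$ and observes that it survives every sampled intersection, contradicting the emptiness check, which is exactly your loop invariant $\reach(P,q,\rest{f}{[0,j\delta]},\eps)\subseteq P_j$ specialized to one point and argued by contradiction. Your version merely makes the induction and the prefix-restriction step explicit.
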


\begin{proof}
	Assume that the algorithm returns an empty set but
	$\reach(P,q,f,\eps)$ is nonempty.
	Then there is a point $\x \in P$ with
	$e^{At} \cdot \x \in \tube(f,\eps)(t)$ for every $t = j \delta$, $1 \leq j \leq m$.
	Hence $e^{A j \delta} \cdot \x \in P_j$,
	which contradicts the condition in line~\ref{line:emptiness_check}.
\end{proof}

\begin{proposition}[Robust completeness]~\label{prop:robust}
	Let $P$ be a polytope in $\R^n$, $A \in \R^{n \times n}$, $f$ a linear 
	trajectory with $\dom(f) = [0,T]$, and $\eps_0 > 0$ such that for every $\x \in 
	P$ there exists $t \in [0,T]$ with $d(\lreach(\{\x\},A,t), \tube(f,\eps)(t)) > 
	\eps_0$.
	Then there exists a finite number $m$ such that 
	Algorithm~\ref{alg:linear_necessary} returns an empty set.
\end{proposition}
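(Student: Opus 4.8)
The plan is to show that, once the time sampling is fine enough, the iterate $P_m$ computed by Algorithm~\ref{alg:linear_necessary} is empty. First I would record a closed form for the iterates. Because $\reachset(\cdot,A,\delta)$ computes $\lreach(\cdot,A,\delta)=e^{A\delta}(\cdot)$ exactly, $e^{A\delta}$ is an invertible linear map and hence commutes with intersection, and $B_j=\ball(e^{Bj\delta}f(0),\eps)=\tube(f,\eps)(j\delta)$ (using that the linear trajectory satisfies $f(t)=e^{Bt}f(0)$), a routine induction on $j$ gives
\[
  P_j=\bigl\{\,e^{Aj\delta}\x \ :\ \x\in P \text{ and } e^{Ai\delta}\x\in\tube(f,\eps)(i\delta)\text{ for all }1\le i\le j\,\bigr\}.
\]
In particular $P_m\neq\emptyset$ if and only if some $\x\in P$ satisfies $\norm{e^{Ai\delta}\x-f(i\delta)}\le\eps$ at every sampled time $i\delta$, $1\le i\le m$. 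Hence it suffices to pick $m$ so that no point of $P$ can do this.

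Second, I would reformulate the hypothesis in trajectory terms. For a single point, $d(\lreach(\{\x\},A,t),\tube(f,\eps)(t))=\max\{0,\norm{e^{At}\x-f(t)}-\eps\}$, so the assumption says exactly: for every $\x\in P$ there is $t_{\x}\in[0,T]$ with $\norm{e^{At_{\x}}\x-f(t_{\x})}>\eps+\eps_0$. Now define $h\colon P\times[0,T]\to\Rnn$ by $h(\x,t)=\norm{e^{At}\x-f(t)}$. Since $t\mapsto e^{At}$ and $f$ are continuous ($f$ being a \pwa trajectory) and so is the norm, $h$ is continuous; and since $P$ is a polytope, $P\times[0,T]$ is compact, so $h$ is \emph{uniformly} continuous. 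Fix $\eta>0$ such that $|t-t'|\le\eta$ implies $|h(\x,t)-h(\x,t')|\le\eps_0/2$ for every $\x\in P$.

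For the final step, take any $m\in\Npos$ with $\delta:=T/m\le\eta$, and let $\x\in P$ be arbitrary with $t_{\x}$ as above. Since $[0,T]=\bigcup_{j=1}^{m}[(j-1)\delta,\,j\delta]$, there is an index $j\in\{1,\dots,m\}$ with $|j\delta-t_{\x}|\le\delta\le\eta$, and then
\[
  h(\x,j\delta)\ \ge\ h(\x,t_{\x})-\eps_0/2\ >\ \eps+\eps_0/2\ >\ \eps,
\]
so $e^{Aj\delta}\x\notin\tube(f,\eps)(j\delta)$. Therefore the condition characterizing $P_m$ fails for \emph{every} $\x\in P$, i.e.\ $P_m=\emptyset$; since the algorithm returns either an earlier empty iterate or $P_m$, it returns an empty set for this $m$, as claimed.

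The only genuinely substantive point I expect is the passage from the \emph{pointwise} escape guaranteed by the hypothesis --- a leaving time $t_{\x}$ that may depend on $\x$ --- to a \emph{single} sampling rate $m$ valid for all of $P$ simultaneously. This is precisely where compactness of the polytope $P$ (together with that of $[0,T]$) enters, through uniform continuity of $h$. The rest is bookkeeping: the iteration applies $\lreach$ exactly, so no error is incurred apart from the time discretization, and the identity $d(\{\v\},\ebox{\w})=\max\{0,\norm{\v-\w}-\eps\}$ is immediate from the triangle inequality.
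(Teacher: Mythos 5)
Your proof is correct, and it is worth noting that it is more careful than the paper's own argument. Both proofs rest on the same high-level idea: compactness of $P$ (and of $[0,T]$) turns the pointwise escape guaranteed by the hypothesis into a single sampling rate that works for all of $P$. But the mechanisms differ. The paper fixes, for each $\x$, an exit time $t^{exit}$ before the violating time $t$ such that the distance to the tube is zero at $t^{exit}$ and strictly positive on $(t^{exit},t]$, sets $\delta_\x = t - t^{exit}$, takes $\delta^* = \inf_{\x\in P}\delta_\x$, and chooses $m > T/\delta^*$; the point is that some sample then lands in each escape interval, where the distance is positive. This argument leaves two things implicit: the characterization of the iterates $P_j$ (which you prove by induction, using that $e^{A\delta}$ is injective and hence commutes with intersection), and --- more importantly --- the fact that $\delta^* > 0$, which does not follow merely from each $\delta_\x$ being positive and is precisely where the $\eps_0$ margin must enter (the paper's subsequent remark acknowledges this but the proof does not use $\eps_0$ at all). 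Your route instead rewrites the point-to-tube distance as $\max\{0,\norm{e^{At}\x - f(t)}-\eps\}$ and invokes uniform continuity of $h(\x,t)=\norm{e^{At}\x-f(t)}$ on the compact product $P\times[0,T]$ to get a modulus $\eta$ tied explicitly to $\eps_0/2$; any sampling step $\delta\le\eta$ then catches every trajectory strictly outside the tube. This makes the role of $\eps_0$ explicit and yields a complete argument where the paper's is only a sketch; the paper's formulation, in exchange, is shorter and makes the geometric picture (a minimal escape duration) more vivid.
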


\begin{proof}
	Fix $\x \in P$ and $t \in [0,T]$ such that
	$d(\lreach(\{\x\},A,t),$ $ \tube(f,\eps)(t)) > \eps_0$.
	Then, by continuity of the distance function,
	there exists a time $t^{exit} \in [0,t]$ such that
	$d(\lreach(\{\x\},A,t^{exit}),$ $ \tube(f,\eps)(t^{exit})) = 0$ and
	for every $t' \in [t^{exit},t]$, $d(\lreach(\{\x\},A,t'),$ $ 
	\tube(f,\eps)(t')) > 0$.
	Let us denote $t-t^{exit} = \delta_{\x}$.
	Compute the infimum of $\delta_{\x}$ for every $\x \in P$,
	denoted as $\delta^*$.
	Then, choose $m > T/\delta^*$.	
\end{proof}

\begin{remark}
	The assumption on Proposition~\ref{prop:robust} about $\eps_0$ is necessary
	in general because $P$ and the \eps-tube image are compact.
	Since $P$ and $P' := \{ \x \in P: \lreach(\{\x\},A,t) \subseteq \tube(f,\eps)(t) \, \forall t\}$
	are topologically closed, $P' \setminus P$ is not topologically closed.
\end{remark}

Algorithm~\ref{alg:linear_necessary} is a sufficient check: the result is empty only if
$\reach(P,q,f,\eps)$ is empty.
Next we consider membership of
$\sigma$ in the \eps-tube of $f$ where $\sigma$ starts from
a fixed point $\x$ in $P$.

\subsection{Approximation of SReach}\label{sec:point_reach_set}

We can achieve goal~\ref{itm:g2} (and hence goal~\ref{itm:g1})
for a singleton set $P = \{\x\}$.
In other words, for a fixed starting point $\sigma(0) = \x$ we can decide if
$\sigma(t) \in \tube(f,\eps)(t)$ for every $t \in [0, T]$.
We consider the case where $\x \in \tube(f,\eps)(0)$.
We can easily determine if $\sigma(T) \in \tube(f,\eps)(T)$
(e.g., by executing Algorithm~\ref{alg:linear_necessary} with $m = 1$).
In the nontrivial case that $\sigma(T) \in \tube(f,\eps)(T)$,
the goal is to compute the maximum of
$d(\sigma(t), \tube(f,\eps)(t))$
over time interval $[0, T]$.
Our approach to that problem involves solving $2n$ optimization problems.

\begin{proposition}[Theorem~4 in~\cite{Hainry08}]
	Let $\x \in \R^n$ be a point and $A \in \R^{n \times n}$
	a matrix with rational coefficients.
	Then $\lreach(\{ \x \},A,t)$ is computable for every time $t$.
\end{proposition}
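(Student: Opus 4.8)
The plan is to reduce the computation of $\lreach(\{\x\},A,t)$ to evaluating the matrix-exponential power series with an \emph{effective} truncation bound. Note first that $\lreach(\{\x\},A,t) = \{ e^{At}\x \}$ is a singleton, so the claim amounts to saying that the vector $\y_t := e^{At}\x \in \R^n$ is computable: given any precision $2^{-p}$ we want to output a rational vector within distance $2^{-p}$ of $\y_t$.

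The computation I would use is $e^{At}\x = \sum_{k=0}^{\infty} \frac{t^k}{k!} A^k \x$. This series converges absolutely, and its tail is controlled: letting $r$ be any rational upper bound on $\norm{A}\,\abs{t}$ (for instance $r := n \max_{i,j} \abs{A_{ij}} \cdot \abs{t}$, which is trivially computable from the rational entries of $A$ and bounds the induced matrix norm), one has $\norm{\sum_{k \ge N} \frac{t^k}{k!} A^k \x} \le \norm{\x} \sum_{k \ge N} \frac{r^k}{k!} \le \norm{\x}\, \frac{r^N}{N!}\, e^{r}$ whenever $N \ge r$. Since $r^N/N! \to 0$ (super-geometrically once $N > 2r$), given a target error $\eta > 0$ one can effectively produce a cutoff $N = N(r,\norm{\x},\eta)$ making this bound at most $\eta/2$, either via a closed-form estimate or via a terminating search. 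The truncated sum $S_N := \sum_{k=0}^{N-1} \frac{t^k}{k!} A^k \x$ is a finite arithmetic expression in the rational data $A$, $\x$, $t$ (or, if $\x$ and $t$ are only given as computable reals, in their rational approximants), hence computable to within $\eta/2$; combining the two estimates gives $\norm{S_N - \y_t} \le \eta$, and letting $\eta \to 0$ yields the desired name for $\y_t$, i.e.\ for $\lreach(\{\x\},A,t)$, for every $t$.

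The main obstacle is purely the effectiveness of the truncation: extracting an explicit cutoff index $N$ from $\norm{A}$, $\abs{t}$, $\norm{\x}$ and the requested accuracy, and---if $\x$ or $t$ are supplied only as approximating sequences rather than rationals---propagating their errors correctly through the finite partial sum $S_N$, which is a routine interval-arithmetic argument since $S_N$ is a polynomial of known degree in those inputs. Everything else is the standard absolute-convergence analysis of the exponential series; in particular, using the crude rational bound on $\norm{A}$ above sidesteps any need to compute the operator norm of $A$ exactly, and the argument is uniform enough to also justify that $t \mapsto e^{At}\x$ is computable as a function, should that be wanted later.
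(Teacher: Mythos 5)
Your argument is correct, but note that the paper does not prove this statement at all: it is imported verbatim as Theorem~4 of the cited work by Hainry, which establishes computability of solutions of linear differential systems within the framework of computable analysis. Since $\lreach(\{\x\},A,t)=\{e^{At}\x\}$ by definition, your reduction to computing $e^{At}\x$ is exactly right, and the rest is the standard effective-power-series argument: the tail estimate $\sum_{k\ge N} r^k/k! \le (r^N/N!)\,e^{r}$ (valid for all $N$, since $(N+j)!\ge N!\,j!$, so your restriction $N\ge r$ is harmless), the crude rational bound $r = n\max_{i,j}|A_{ij}|\cdot|t|$ on $\norm{A}_\infty |t|$ to avoid computing an operator norm, an effectively found cutoff $N$, and exact (or interval) arithmetic on the finite partial sum. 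The one point worth being careful about is the phrase ``for every time $t$'': for a non-computable real $t$ the output cannot be computable in the absolute sense, so the honest statement is that the map $t\mapsto e^{At}\x$ is computable (equivalently, the output is computable relative to a name for $t$ and $\x$); you flag this correctly at the end. What your proof buys over the citation is a self-contained, implementable algorithm with an explicit truncation bound, which is in fact closer in spirit to what the paper's implementation (an ODE solver with error control) actually does; what the citation buys is brevity and a pointer to the general computable-analysis result of which this is a special case.
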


\begin{algorithm}[t]
	\caption{Synchronization check for linear trajectories}
	\label{alg:linear_singleton}
	\begin{algorithmic}[1]
		\REQUIRE Two matrices $A, B \in \R^{n \times n}$, states $\x_0, \y_0 \in \R^n$, and values $\eps, T \in \Rnn$.
		
		\STATE $\sigma(t)$ := $\expmatrix(At) \cdot \y_0$
		\STATE $f(t)$ := $\expmatrix(Bt) \cdot \x_0$
				
		\STATE $h(t)$ := $f(t) - \sigma(t)$
		
		\IF {$ \Vert \x_0 - \y_0 \Vert \leq \eps$ \AND $ \Vert f(T) - \sigma(T) \Vert \leq \eps$}
			\STATE $v$ := $0$
			\FOR {$1 \leq i \leq n$}
				\STATE $v_{max}$ := $\maximize( \absv( \coordinate(h(t),i))), [0,T])$
				\STATE $v$ := max$(v_{max}, v)$	 
			\ENDFOR
			\IF {$v \leq \eps$}
				\RETURN{\True}
			\ENDIF
		\ENDIF
		\RETURN{\False}
	\end{algorithmic}
\end{algorithm}

We summarize the procedure in Algorithm~\ref{alg:linear_singleton}.
The inputs are two matrices $A, B \in \R^{n \times n}$,
states $\x_0, \y_0 \in \R^n$, and values $\eps, T \in \Rnn$.
Recall that $A$ is the flow of location $q$, $\Flow(q)$, and the linear trajectory 
$f$ is given as the tuple $(\{ [0,T] \}, \{ B \})$.
Initially, the algorithm defines the linear trajectories
$\sigma(t)$ and $f(t)$ and their difference $h(t)$.
If the norm of this difference is less than \eps for $t=0$
and $t=T$, the algorithm computes
the maximum (\maximize function) over $[0,T]$
of the absolute values (\absv function)
for each coordinate $i$ of $h(t)$, that is, $\coordinate(h(t),i)$.
The algorithm returns \True if
the maximum distance between $\sigma$ and $f$ is less than \eps,
and \False otherwise.
Thus the algorithm determines emptiness of
$\reach(\{\x\},q,f,\eps)$ for $\x \in P$.

\begin{proposition}\label{prop:equiv}
	Algorithm~\ref{alg:linear_singleton} returns \False if and only if\\
	$\reach(\{\x\},q,f,\eps)$ is empty.
\end{proposition}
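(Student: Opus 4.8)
The plan is to use that, for a singleton $P=\{\x\}$ and a one-location path $q$, there is a single candidate execution, so that emptiness of $\reach(\{\x\},q,f,\eps)$ reduces to one tube-containment question that the three tests of Algorithm~\ref{alg:linear_singleton} settle exactly. I will prove the equivalent statement that the algorithm returns \True if and only if $\reach(\{\x\},q,f,\eps)$ is nonempty.

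\emph{Step 1: isolate the candidate.} By Definition~\ref{def:exec}, any $\sigma\in\exec(\H)$ following the path $q$ satisfies $\dot\sigma(t)=\Flow(q)\cdot\sigma(t)=A\sigma(t)$ on $(0,T)$; adding $\sigma_0=\x=\y_0$ forces $\sigma(t)=e^{At}\y_0$ for all $t\in[0,T]$ by uniqueness of solutions of linear ODEs, which is precisely the function $\sigma(t):=\expmatrix(At)\cdot\y_0$ constructed by the algorithm. The requirement $\sync{\sigma}{f}$ is automatic here: $\sigma$ and $f$ both have domain $[0,T]$ and, being single-piece linear trajectories, $\ts(\sigma)=\ts(f)=\{0,T\}$. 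Hence $\reach(\{\x\},q,f,\eps)$ is nonempty exactly when this $\sigma$ stays inside the \eps-tube, i.e. $\sigma(t)\in\ebox{f(t)}$ for every $t\in[0,T]$ (and then its unique element is $\sigma(T)$).

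\emph{Step 2: match tube containment to the computed quantities.} With $h(t):=f(t)-\sigma(t)$ and $\norm{\cdot}$ the infinity norm, $\sigma(t)\in\ebox{f(t)}$ holds iff $\abs{\coordinate(h(t),i)}\le\eps$ for every coordinate $i$. So $\sigma$ stays in the tube over $[0,T]$ iff $\max_{t\in[0,T]}\max_{1\le i\le n}\abs{\coordinate(h(t),i)}\le\eps$; exchanging the two finite maxima, this is exactly the value $v=\max_i v_{max}$ accumulated in the loop, so the test $v\le\eps$ is equivalent to tube containment. The earlier tests $\norm{\x_0-\y_0}\le\eps$ and $\norm{f(T)-\sigma(T)}\le\eps$ are just the instances of containment at $t=0$ and $t=T$, hence implied by $v\le\eps$. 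Chaining the two steps, the algorithm returns \True iff $\sigma$ stays in the tube iff $\reach(\{\x\},q,f,\eps)\neq\emptyset$, which is the claim.

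The ODE uniqueness and the maxima exchange are routine; the delicate point is the soundness of the oracle \maximize, namely that $v_{max}$ is returned as the \emph{exact} value $\max_{t\in[0,T]}\abs{\coordinate(h(t),i)}$. I would justify this by combining the computability of $\lreach(\{\x\},A,t)$ cited above from~\cite{Hainry08} with the fact that each coordinate of $h$ is real-analytic on the compact interval $[0,T]$ and therefore attains its maximum, and I would treat it as the standing correctness assumption on the subroutines rather than re-derive it. (Should the invariant at $q$ not be all of $\R^n$, the same argument goes through with the \eps-tube replaced by its intersection with $\Inv(q)$; at this stage of the construction the invariant is taken to be $\R^n$.)
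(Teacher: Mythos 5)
Your proposal is correct; the paper states Proposition~\ref{prop:equiv} without any proof, and your argument formalizes exactly the reasoning implicit in the surrounding text (the unique candidate execution $e^{At}\y_0$ by ODE uniqueness, the identification of the infinity-norm tube condition with the coordinatewise maxima of $\abs{\coordinate(h(t),i)}$ computed in the loop, and the exactness of the \maximize oracle as a standing assumption, matching the paper's ``we assume a numerically sound optimization tool''). Your closing parenthetical about $\Inv(q)$ correctly flags the one hypothesis the paper leaves tacit, since the definition of \reach requires $\sigma \in \exec(\H)$ and hence invariant containment, which Algorithm~\ref{alg:linear_singleton} never checks.
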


We assume a numerically sound optimization tool in practice.
Algorithm~\ref{alg:linear_singleton} gives us a way to
obtain an under-approximation of $\reach(P,q,f,\eps)$:
apply Algorithm~\ref{alg:linear_singleton} to every vertex of $P$
and construct the convex hull of the vertices for which
Algorithm~\ref{alg:linear_singleton} returns \True.
Next we prove that this set is contained in $\reach(P,q,f,\eps)$.

\begin{proposition}\label{prop:convex}
	Let $P$ be a convex polytope, $A = \Flow(q)$, $f$ be a linear trajectory with 
	domain $[0,T]$, and $\eps$ be a value in $\Rnn$.
	Then, $\reach(P,q,f,\eps) = \lreach(P,A,T)$ if $\reach(\{\v\},q,f,\eps)$ is not 
	empty for every $\v \in \vertices(P)$.
\end{proposition}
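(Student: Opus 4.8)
The plan is to prove the two inclusions separately and to notice that only one of them uses the hypothesis. The inclusion $\reach(P,q,f,\eps) \subseteq \lreach(P,A,T)$ holds unconditionally: by definition, any $\x \in \reach(P,q,f,\eps)$ equals $\sigma_{end} = \sigma(T)$ for some execution $\sigma$ following the single location $q$ with $\sigma_0 \in P$; since the dynamics of $q$ are linear with matrix $A$ and $\sigma$ is continuous, the only such execution is $\sigma(t) = e^{At} \cdot \sigma_0$, hence $\x = e^{AT} \cdot \sigma_0 \in \{ e^{AT} \cdot \y : \y \in P \} = \lreach(P,A,T)$.

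For the reverse inclusion I would first recall the pointwise characterization behind Algorithm~\ref{alg:linear_singleton} and Proposition~\ref{prop:equiv}: for a single state $\y$, the set $\reach(\{\y\},q,f,\eps)$ is nonempty if and only if $e^{At} \cdot \y \in \tube(f,\eps)(t)$, i.e.\ $\norm{f(t) - e^{At} \cdot \y} \leq \eps$, for every $t \in [0,T]$ --- the execution starting from $\y$ is unique, so ``some execution stays in the tube'' collapses to ``the execution stays in the tube''. By hypothesis this holds for every vertex $\v \in \vertices(P)$.

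Next I would take an arbitrary $\x \in P$ and propagate this to $\x$ by convexity. Writing $\x = \sum_j \lambda_j \v_j$ as a convex combination of the vertices $\v_j$ of $P$ (so $\lambda_j \geq 0$ and $\sum_j \lambda_j = 1$), for any fixed $t \in [0,T]$ linearity of $e^{At}$ together with $\sum_j \lambda_j = 1$ gives
\[
\norm{f(t) - e^{At}\cdot\x}
= \Big\Vert \sum_j \lambda_j \bigl( f(t) - e^{At}\cdot\v_j \bigr) \Big\Vert
\leq \sum_j \lambda_j \, \norm{f(t) - e^{At}\cdot\v_j}
\leq \sum_j \lambda_j \, \eps = \eps ,
\]
using the triangle inequality and then the vertex hypothesis. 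Hence $e^{At}\cdot\x \in \tube(f,\eps)(t)$ for all $t \in [0,T]$, so $t \mapsto e^{At}\cdot\x$ is an execution following $q$, synchronized with $f$, starting in $P$, and staying inside the \eps-tube; therefore $e^{AT}\cdot\x = \sigma_{end} \in \reach(P,q,f,\eps)$. Since $\x \in P$ was arbitrary, $\lreach(P,A,T) = \{ e^{AT}\cdot\x : \x \in P \} \subseteq \reach(P,q,f,\eps)$, and combined with the easy direction this yields the claimed equality.

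The argument is essentially elementary; the one structural fact it rests on is that the \eps-tube at each instant is a convex set (an infinity-norm ball) and that the flow map $e^{At}$ is linear, so vertex-wise containment in the tube propagates to the whole polytope. I do not anticipate a genuine obstacle --- the only point needing care is the bookkeeping between executions of the automaton that follow $q$ and the closed-form flow $t \mapsto e^{At}\cdot\x_0$, in particular that such an execution is unique and occurs as a synchronized in-tube execution exactly when this flow stays inside $\tube(f,\eps)$. Should one also wish to track the invariant $\Inv(q)$, the same convexity argument applies verbatim, since $\Inv(q)$ is a polytope and hence convex.
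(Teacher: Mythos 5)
Your proof is correct and rests on the same key fact as the paper's: at each time $t$ the tube cross-section $\tube(f,\eps)(t)$ is convex and the flow map $e^{At}$ is linear, so containment at the vertices propagates to the whole polytope. The only presentational difference is that you establish this directly via convex combinations and the triangle inequality, whereas the paper argues by contradiction and invokes the fact that a polytope lies in a convex set if and only if its vertices do; the mathematical content is identical.
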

\begin{proof}
	The inclusion $\reach(P,q,f,\eps) \subseteq \lreach(P,A,T)$ is obvious.
	Let $\reach(\v,q,f,\eps) \neq \emptyset$ for every $\v \in \vertices(P)$.
	We want to show that for every point $\x \in P$,
	$\lreach(\{\x\},A,t)$ belongs to $\tube(f,\eps)(t)$ for all $t \in [0,T]$.
	Assume there exist $\x \in P$ and
	$t \in [0,T]$ with $\lreach(\{\x\},A,t) \not\subseteq \tube(f,\eps)(t)$,
	so $\reach(\{\x\},q,f,\eps) = \emptyset$.
	We know that $\lreach(\{\x\},A,t) \subseteq \lreach(P,A,t)$.
	So $\lreach(P,A,t)$ $ \not\subseteq \tube(f,\eps)(t)$.
	Moreover, $\tube(f,\eps)(t)$ is convex for each $t \in [0,T]$.
	For any polytope $P$ and convex set $C$ it holds that
	$P \subseteq C$ if and only if $\vertices(P) \subseteq C$.
	Therefore, $\lreach(P,A,t) \subseteq \tube(f,\eps)(t)$ if and
	only if $\lreach(\vertices(P), A, t) \subseteq \tube(f,\eps)(t)$,
	i.e., $\lreach(\{\v\},A,t) \subseteq \tube(f,\eps)(t)$ for each
	$\v \in \vertices(P)$ and $t \in [0,T]$.
	By assumption, $\reach(\{\v\},q,f,\eps) \neq \emptyset$
	for each $\v \in \vertices(P)$.
	Using Proposition~\ref{prop:equiv},
	$\lreach(\{\v\},A,t) \subseteq \tube(f,\eps)(t)$ for each $\v \in \vertices(P)$.
	Hence $\lreach(P,A,t) \subseteq \tube(f,\eps)(t)$ for each $\x \in P$:
	a contradiction.
\end{proof}

\begin{corollary}
	If Algorithm~\ref{alg:linear_singleton} returns \True for all vertices
	$\v \in \vertices(P)$, then $\reach(P,q,f,\eps) = \lreach(P,A,T)$.
\end{corollary}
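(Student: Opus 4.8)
The plan is to simply chain together Proposition~\ref{prop:equiv} and Proposition~\ref{prop:convex}, since the corollary is essentially a restatement of the latter in terms of the algorithm's output.

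First I would unpack what ``Algorithm~\ref{alg:linear_singleton} returns \True for a vertex $\v$'' means. When invoked on a vertex, the algorithm is run with $\y_0 = \v$ as the starting point $\sigma(0)$, with $\x_0$ the initial state of the linear trajectory $f$, and with $A = \Flow(q)$ and $B$ the flow matrix of $f$. By Proposition~\ref{prop:equiv}, the algorithm returns \False if and only if $\reach(\{\v\},q,f,\eps)$ is empty; taking the contrapositive, it returns \True if and only if $\reach(\{\v\},q,f,\eps) \neq \emptyset$. Hence the hypothesis of the corollary --- that the algorithm returns \True for every $\v \in \vertices(P)$ --- is equivalent to the statement that $\reach(\{\v\},q,f,\eps) \neq \emptyset$ for every $\v \in \vertices(P)$.

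Second, this is exactly the hypothesis of Proposition~\ref{prop:convex} (with the same $A = \Flow(q)$, the same linear trajectory $f$ on $[0,T]$, and the same $\eps$), which then yields $\reach(P,q,f,\eps) = \lreach(P,A,T)$. This completes the argument.

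I do not expect any real obstacle here. The only points that need care are bookkeeping points: one must take the contrapositive of Proposition~\ref{prop:equiv}, since that proposition is phrased in terms of \False rather than \True, and one must check that the algorithm is called with the arguments that make Proposition~\ref{prop:equiv} applicable to each vertex of $P$. Once that is in place, the corollary follows immediately from Propositions~\ref{prop:equiv} and~\ref{prop:convex}.
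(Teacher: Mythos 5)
Your proposal is correct and matches the paper's intent exactly: the corollary is stated without explicit proof precisely because it is the immediate composition of Proposition~\ref{prop:equiv} (contrapositive: \True{} iff $\reach(\{\v\},q,f,\eps) \neq \emptyset$) with Proposition~\ref{prop:convex}. Your bookkeeping about the arguments passed to the algorithm is the right thing to check and raises no issues.
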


\subsection{Polytope refinement}\label{sec:refinement}

Recall that $P$ is a polytope, $f$ is a linear trajectory 
$f(t) = e^{Bt} \cdot \x_0$ with time domain $[0,T]$, $q$ is a location in
some \ldha $\H$ with $\Flow(q) = A$,
$\eps$ is a value in $\Rnn$, and $m > 0$ is a natural number.
We can use Algorithm~\ref{alg:linear_necessary} from Section~\ref{sec:emptiness}
to obtain an over-approximation $P$ of the synchronized reachable set.
If $P$ is nonempty, we can use Algorithm~\ref{alg:linear_singleton}
from Section~\ref{sec:point_reach_set} for every vertex of $P$, and if the
algorithm returns \True for some vertex, we have a nonempty
under-approximation and can conclude membership of $f$ in $\H$.
If Algorithm~\ref{alg:linear_singleton} returns
\False for all vertices, we cannot conclude.

Next we propose a new procedure, which together with
Proposition~\ref{prop:convex} suggests an algorithm for computing
a more precise under-approximation of $\reach(P, \pi, f, \eps)$.
Finally, these procedures together induce an algorithm to refine the over- and
under-approximations.
Intuitively, recalling Figure~\ref{fig:sync_reach_set}, this refinement narrows
the discrepancy between the the over-approximation (yellow) and the under-approximation (green).

First we observe that the over-approximation $P$ is a convex polytope.
The idea is to contract this polytope to a new polytope.
Given a value $\delta \in \Rnn$, we define the $\delta$-contraction of $P$ as follows.

\begin{definition}
	Let $P$ be a polytope, $\delta$ be a value in $\Rnn$, and
	$\constraints(P) = \{ \a_1 \x \sim b_1, \ldots, \a_m \x \sim b_m \}$.
	The \emph{$\delta$-contraction} of $P$
	is the polytope
	$P_\delta := \{\x \in \R^n : \a_1 \x \sim c_1, \ldots, \a_m \x \sim c_m \}$
	where $c_j = b_j$ if $\sim$ is '$=$' and
	$c_j = b_j - \dfrac{\delta}{\norm{a_j}_2}$ if $\sim$ is '$\leq$',
	for every $1 \leq j \leq m$.
\end{definition}

\begin{algorithm}[t]
	\caption{Polytope refinement}
	\label{alg:refine}
	\begin{algorithmic}[1]
		\REQUIRE A polytope $P$,
		two matrices $A, B \in \R^{n \times n}$,
		a state $\x_0 \in \R^n$,
		and three values $\eps, T, \delta \in \Rnn$.
		
		\STATE $V^+$ := $\emptyset$
		
		\WHILE{\True}
			
			\STATE $V^-$ := $\emptyset$
			
			\FOR {$\v \in \vertices(P)$}
			
				\IF{Algorithm~\ref{alg:linear_singleton}($A, B, \x_0, \v, \eps, T$)}
				
					\STATE $V^+$ := $V^+ \cup \{\v\}$
				
				\ELSE
				
					\STATE $V^-$ := $V^- \cup \{\v\}$
				
				\ENDIF
			\ENDFOR
			
			\IF{$V^- = \emptyset$}
				
				\RETURN $\chull(V^+)$
				
			\ENDIF
			
			\STATE $P$ := $\contract(P, \delta)$
			
		\ENDWHILE
	\end{algorithmic}
\end{algorithm}

We can hence take the over-approximation $P$, compute the
$\delta$-contraction $P'$, and then apply
Algorithm~\ref{alg:linear_singleton} to all vertices of $P$ and $P'$.
Ultimately we may have to repeat this contraction several times
(at most $\lceil d / \delta \rceil$ times, where $d$ is the diameter of $P$).
In the end, since we know that the true synchronized reachable set is convex,
we can take the convex hull of all those vertices for which
Algorithm~\ref{alg:linear_singleton} returned \True (i.e., these
vertices belong to the synchronized reachable set).
We summarize the refinement in Algorithm~\ref{alg:refine}, where the
procedure \contract applies a $\delta$-contraction.

In principle, now that we have two polytopes $P$ and $P'$
over-approximating and under-approximating the synchronized
reachable set, respectively, a natural additional refinement
procedure can be conceived where one iteratively tries to
enlarge the under-approximation or shrink the over-approximation.
We did not investigate this direction because the above scheme is already very precise in practice.
(In fact, we rather observed that the approximations become too precise; see the further discussion in Section~\ref{sec:implementation}.)

\subsection{Summary}

Algorithm~\ref{alg:single_piece} summarizes the overall procedure for computing
both an under-approximation and an over-approximation of the synchronized
reachable set for a linear trajectory $f$.
We first use Algorithm~\ref{alg:linear_necessary} to compute the
over-approximation $\overline{P_1}$.
If the over-approximation is empty, we can conclude that $f$ is not
\eps-captured.
Otherwise, taking the end state $\x_1$ of $f$ and inverting the dynamics ($\dot{f}_\text{inv}(\x) = -\dot{f}(\x)$), we
use Algorithm~\ref{alg:refine} to compute the under-approximation
$\underline{P_1}$.

\begin{algorithm}[t]
	\caption{Membership query for a single piece}
	\label{alg:single_piece}
	\begin{algorithmic}[1]
		\REQUIRE A polytope $P_0$,
		two matrices $A, B \in \R^{n \times n}$,
		a state $\x_0 \in \R^n$,
		three values $\eps, T, \delta \in \Rnn$,
		and a value $m \in \Npos$.
		
		\STATE $\overline{P_1}$ := Algorithm~\ref{alg:linear_necessary}($A, B, \x_0, P_0, \eps, T, m$)
		
		\IF{isempty($\overline{P_1}$)}
		
			\RETURN{$\emptyset$, $\emptyset$}
		
		\ENDIF
		
		\STATE $\x_1$ := $\expmatrix(AT) \cdot \x_0$
		
		\STATE $\underline{P_1}$ := Algorithm~\ref{alg:refine}($\overline{P_1}$, $-A$, $-B$, $\x_1$, \eps, $T$, $\delta$)
		
		\RETURN{$\underline{P_1}$, $\overline{P_1}$}
	\end{algorithmic}
\end{algorithm}

We illustrate the algorithm and the generalization to
multiple pieces with the following parametric
linear trajectories:
\begin{align}
	\dot{\x} &= \begin{pmatrix} 0 & 1 \\ -1 & 0 \end{pmatrix} \x, & 
\x(0) &= \begin{pmatrix} 1 \\ 1 \end{pmatrix} \label{eq:system1} 
\\
	\dot{\y} &= \begin{pmatrix} 0 & 1 - \alpha \\ -1 & 0 \end{pmatrix} 
\y, & \y(0) &= \y_0 \label{eq:system2}
\end{align}

\begin{figure}
	\centering
	\begin{subfigure}[b]{.49\linewidth}
		\centering
		\includegraphics[width=\linewidth,height=50mm,keepaspectratio]{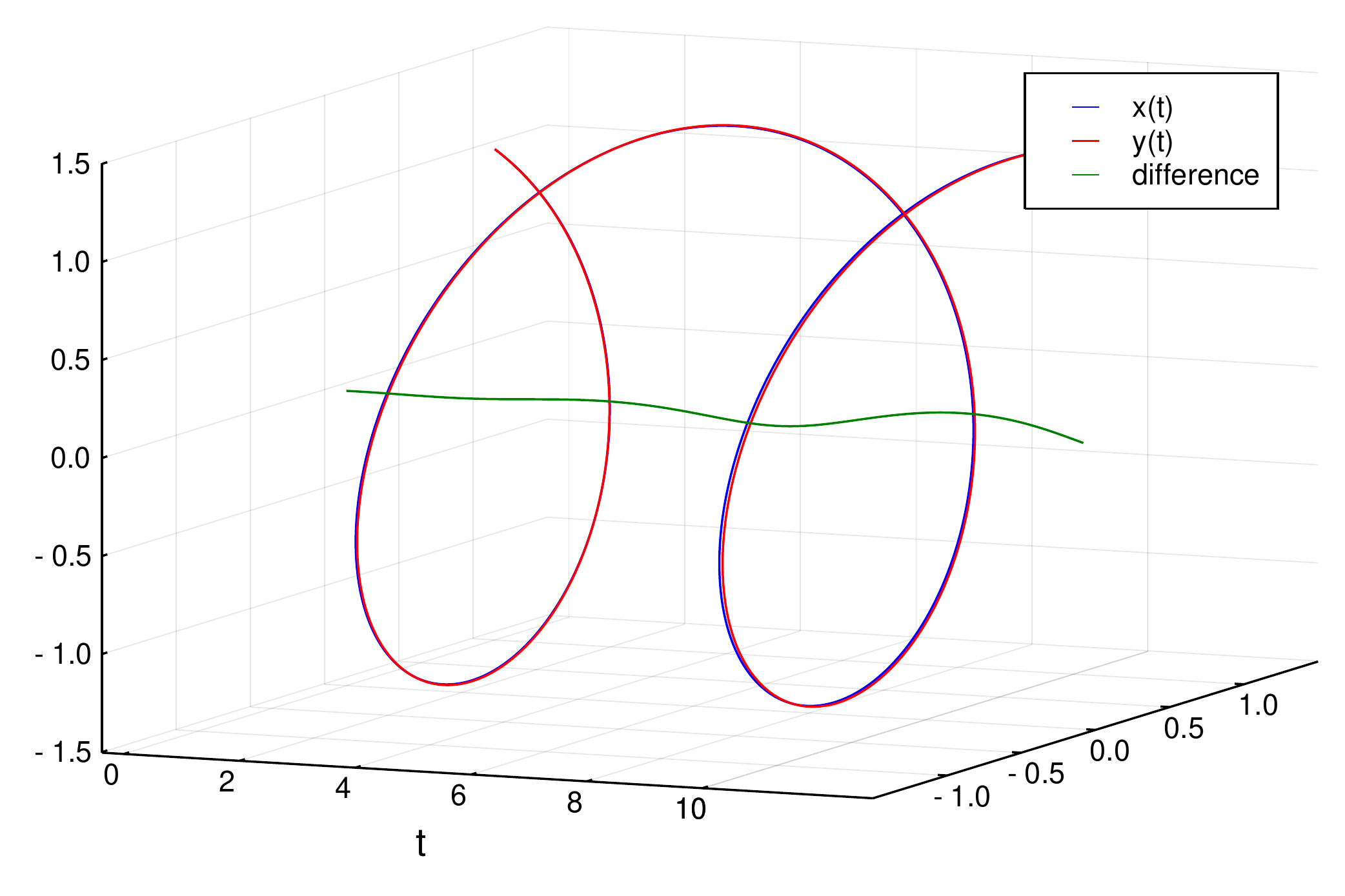}
		\caption{Two executions of system~\eqref{eq:system1} (blue)
		and system~\eqref{eq:system2} (red) and their difference (green).}
		\label{fig:execution_example_3d}
	\end{subfigure}
	\hfill
	\begin{subfigure}[b]{.47\linewidth}
		\centering
		\includegraphics[width=\linewidth,height=50mm,keepaspectratio]{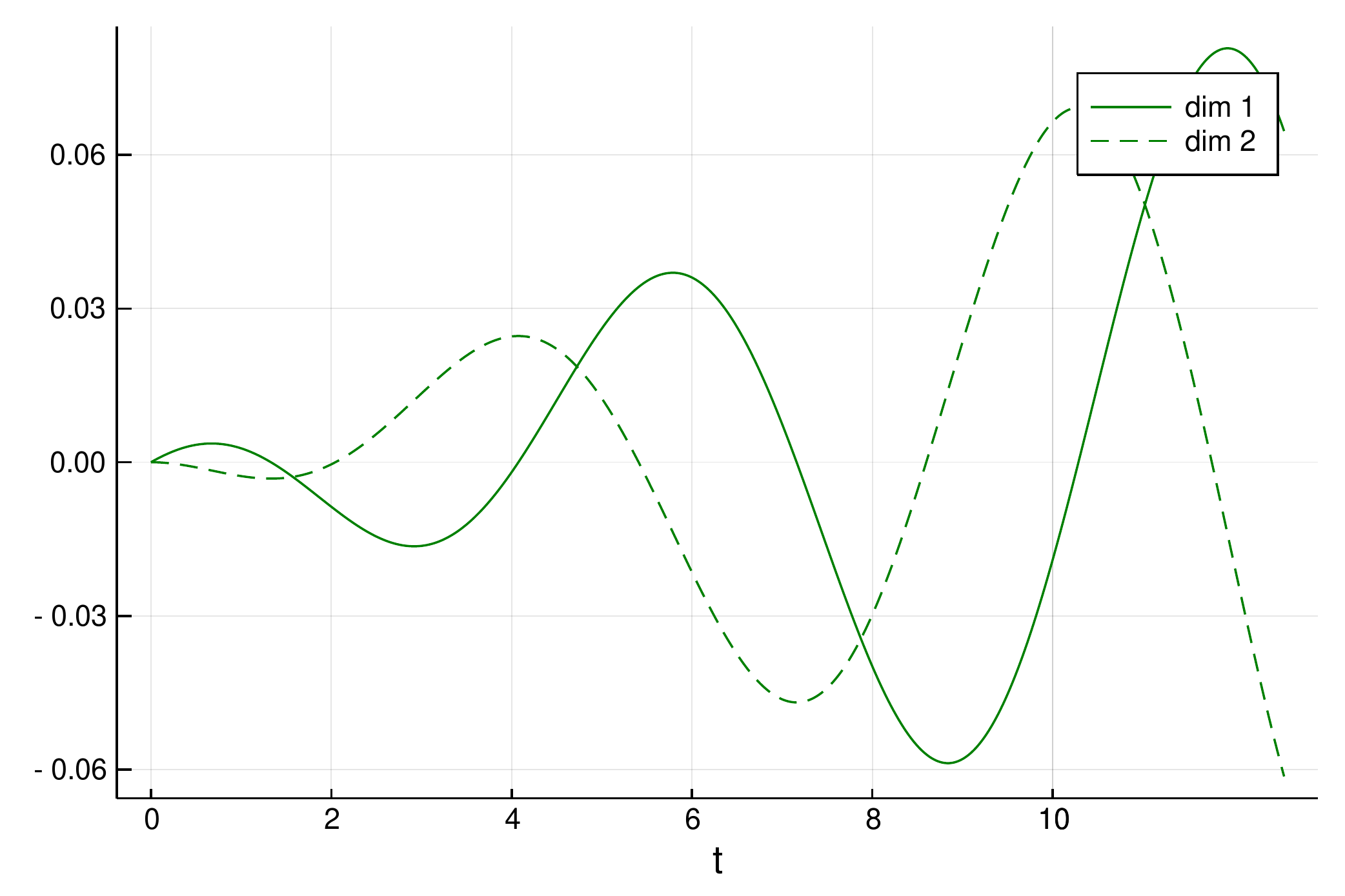}
		\caption{Difference of the trajectories projected to time and one dimension.}
		\label{fig:execution_example_diff}
	\end{subfigure}
	\caption{Trajectories and difference of systems~\eqref{eq:system1}
	and~\eqref{eq:system2} with parameters $\alpha = 0.01$ and
	$\y_0 = \x(0)$.}
	\label{fig:executions}
\end{figure}

System~\eqref{eq:system1} is fixed and takes the role of the linear trajectory $f$ 
while system~\eqref{eq:system2} models the location of an \ldha.
In the following, we fix the parameter value $\alpha$ and ask whether there
exists an initial state $\y_0$ such that the corresponding execution of
system~\eqref{eq:system2} is synchronized with $f$.
In Figure~\ref{fig:executions} we plot the executions for $\alpha = 0.01$
and the same initial state $\y_0 = \x(0)$.
It can be seen that for a time horizon of $4 \pi$ we need to choose \eps larger than ${\sim}0.08$.

In Figure~\ref{fig:membership_example} we plot the results for
$\eps = 0.1$ and $\alpha = 0$ or $\alpha = 0.01$, respectively.
In Figure~\ref{fig:membership_example_same_1} we see the
under-approximation computed by the algorithm in light green.
The dark green set is a simplified under-approximation that we use to handle
complexity, further described in Section~\ref{sec:implementation}.
Similarly, the dark yellow set is the over-approximation computed by the
algorithm, while the light yellow set is a simplified over-approximation.
It can be seen that the gap between the over-approximation (dark yellow)
and the under-approximation (light green) is very narrow, indicating that the
refinement procedure (Algorithm~\ref{alg:refine}) is precise.
Also note that the true synchronized reachable set in this case is a Euclidean
ball because, while the executions all follow the same dynamics as $f$,
those executions starting from a state outside this ball rotate around $f$
and eventually leave the tube (since the tube does not rotate).

In Figure~\ref{fig:membership_example_same_23} we plot the intermediate
sets for the same executions but modeled as \pwa trajectories with $23$ pieces,
starting with the set at time~$0$.
In theory, the settings with a single piece and $23$ pieces of the same
dynamics are equivalent; however, due to the simplifications of the
approximations for each piece, the approximations lose precision in the latter
case.
Still, the approximations in the last piece are sufficiently precise to prove
that the under-approximation (green set) is nonempty and hence we can conclude
with a positive answer to the membership query.
In the last subplot we depict a random sampling from the over-approximation
where we apply Algorithm~\ref{alg:linear_singleton} to check whether the state
indeed corresponds to a synchronized execution (green dot) or not (red dot).
Figure~\ref{fig:membership_example_different_1} shows the setting for
$\alpha = 0.01$ with similar results.

\begin{figure}
	\centering
	\begin{subfigure}[b]{0.49\linewidth}
		\centering
		\includegraphics[width=\linewidth,height=6cm,keepaspectratio]{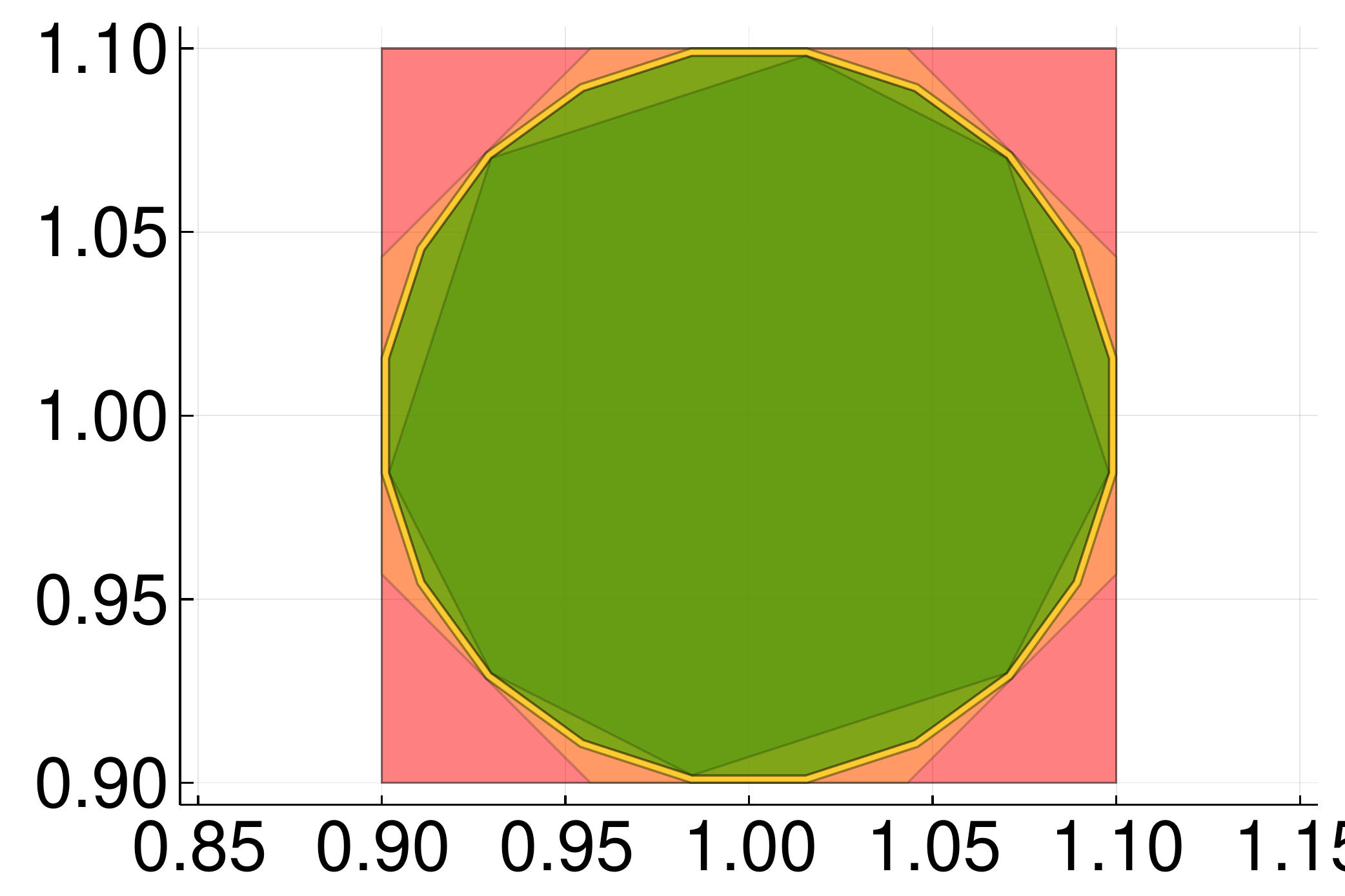}
		\caption{Analysis for $\alpha = 0$ and a single pieces of duration $4\pi$.}
		\label{fig:membership_example_same_1}
	\end{subfigure}
	\hfill
	\begin{subfigure}[b]{0.47\linewidth}
		\centering
		\includegraphics[width=\linewidth,height=6cm,keepaspectratio]{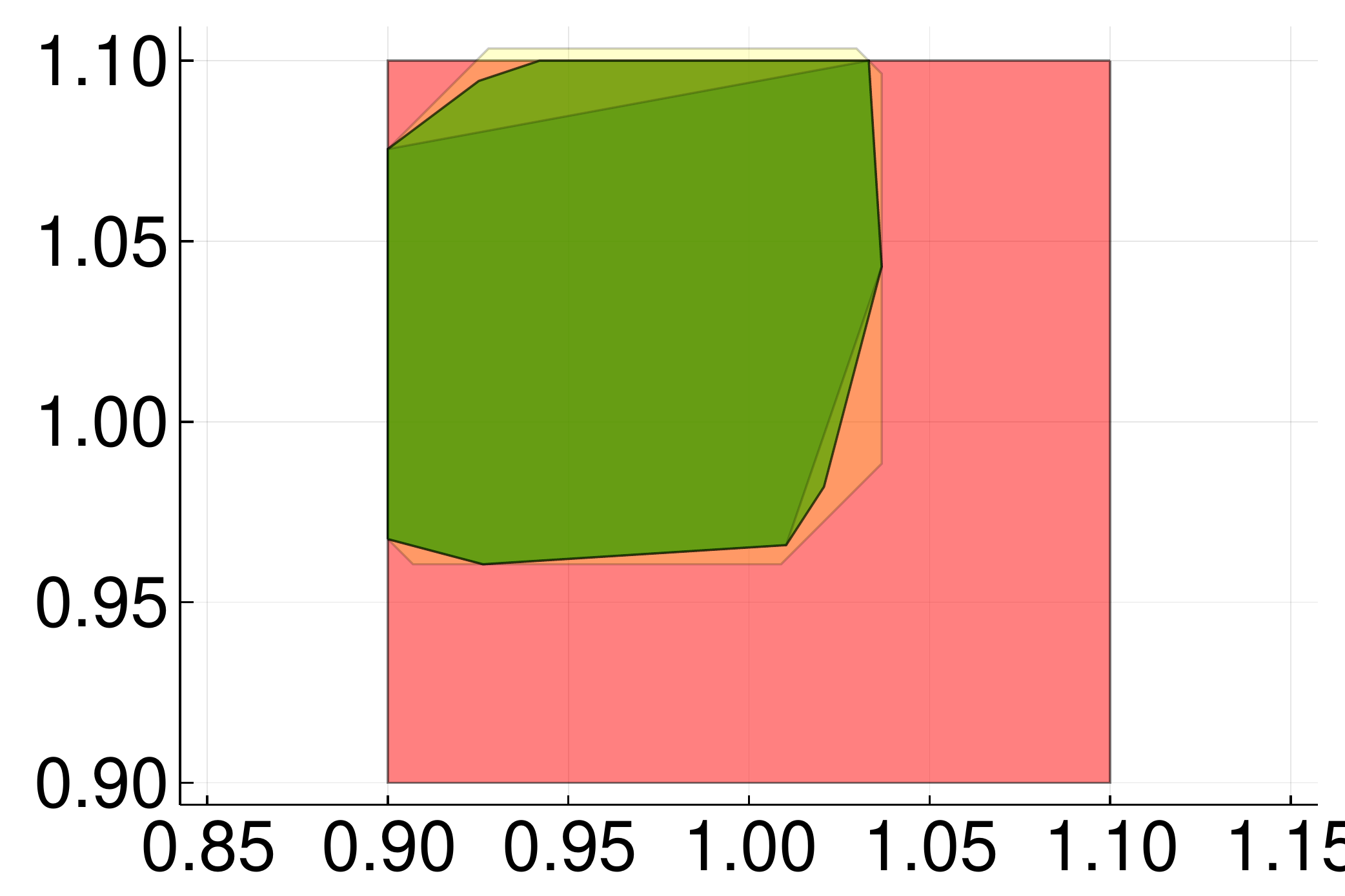}
		\caption{Analysis for $\alpha = 0.01$ and a single pieces of duration $4\pi$.}
		\label{fig:membership_example_different_1}
	\end{subfigure}
	
	\begin{subfigure}[b]{\linewidth}
		\centering
		\includegraphics[width=\linewidth,height=6cm,keepaspectratio,clip,trim=0 115mm 0 0]{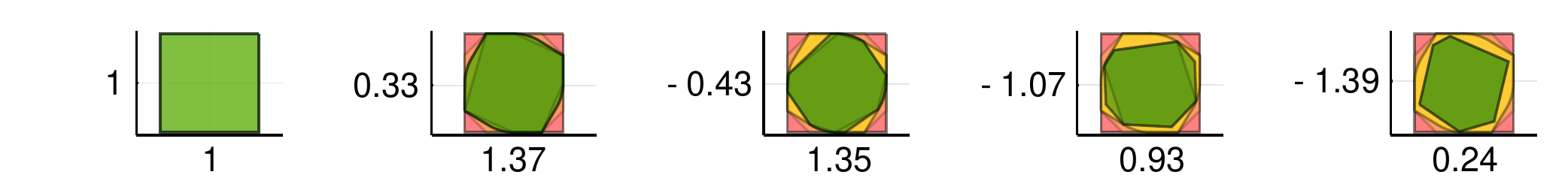}
		\includegraphics[width=\linewidth,height=6cm,keepaspectratio,clip,trim=0 0 0 115mm]{membership_same_23}
		\caption{Analysis for $\alpha = 0$ and $23$ pieces of uniform 
		duration (only the first and last four intermediate results are shown).
		The last subplot shows $1{,}000$ samplings from the final dark 
		yellow region.}
		\label{fig:membership_example_same_23}
	\end{subfigure}
	\caption{Analysis results for systems~\eqref{eq:system1} and~\eqref{eq:system2} with $\eps = 0.1$ and $\alpha = 0.01$ or $\alpha = 0$, respectively.}
	\label{fig:membership_example}
\end{figure}

\section{Model update}\label{sec:update}

In this section we describe a procedure to solve
Problem~\ref{problem:synthesis} and propose a minimality criterion.
The procedure tackles the problem by evaluating the given \pwa trajectories in 
an online fashion.

\subsection{Lexicographic order to rank model updates}
For a given \adha $\H$, a \pwa trajectory $f$, and a value $\eps \in \Rnn$, our 
procedure searches for a path $\pi$ in $\H$ such that the membership query of 
$f$ in $\H$ is positive. If there is no such path in $\H$,
the procedure modifies $\H$ such that the modified \adha includes such a path.
The path selection and the corresponding modifications of the \adha
are chosen in the following order:
\begin{enumin}{,}{and}
	\item increasing invariants and guards
	\item adding new transitions
	\item adding new locations
\end{enumin}
The rationale is to keep the number of locations as small as possible.

Formally, we define a tuple $\mathit{mod} = (n_l, n_t, n_c)$ that
keeps track of the above modifications, where
$n_l$ tracks the number of new locations,
$n_t$ tracks the number of new transitions, and
$n_c$ tracks the number of modified constraints (invariants and guards).
We will use the tuple $\mathit{mod}$ for path selection in a lexicographic
order where we aim to find the minimal tuple.
For instance, the tuple $(0,0,0)$, representing no modifications at all,
is selected over any other tuple; the tuple $(0,2,0)$, representing two transition
additions, is selected over the tuple $(1,0,0)$, representing a location addition.

\subsection{Online model update}

For a given \adha $\H$, a \pwa trajectory $f$, and a value $\eps \in \Rnn$, we 
update $\H$ for each affine piece in $f$ if required. We describe
how $\H$ is modified for a concrete piece of $f$ according to a location $q$
that may either be part of $\H$ or be a new location to be added to $\H$.

\begin{definition}
	Consider an \adha $\H$, a path $\pi$ in $\H$ with $\len{\pi} = k-1$, an existing 
	or new location $q$, a \pwa trajectory $f$, represented by the tuple $(\{ [0, 
	t_1],\ldots, [t_{k-1}, t_k] \},$ $ \{ A_1, \ldots, A_k \},$ $ \{ \b_1, \ldots, \b_k 
	\})$, a polyhedron $P$, and a value $\eps \in \Rnn$.
	A \emph{$q$-update of $\H$ with respect to 
	$f$, $P$, and $\eps$}  is an \adha $\H'$, denoted by 
	$\modha_{q,f}^{P,\eps}(\H)$, 
	such that $\Q' = \Q \cup \{ q \}$, $\E' = \E \cup \{ (\last{\pi}, q) \}$,
	and the remaining components depend on whether $q$ exists in $Q$ or is
	a new location.
	\\[1mm]
	If $q \in Q$:
	\begin{enumin}{,}{and}
		\item
		$\Flow' \equiv \Flow$
		\item
		$\rest{\Inv'}{\Q \setminus \{q\}} \equiv \Inv$ and
		$\Inv'(q) =$ $\chull(\Inv(q) \cup \rset_I)$
		\item
		$\rest{\G'}{\E \setminus \{ (\last{\pi}, q) \}} \equiv \G$ and \linebreak
		$\G'(\last{\pi}, q) = $ $\chull(\G(\last{\pi}, q) \cup \rset_G)$
	\end{enumin}
	\\[1mm]
	If $q \notin \Q$:
	\begin{enumin}{,}{and}
		\item
		$\rest{\Flow'}{\Q} \equiv \Flow$ and $\Flow'(q) = (A_k, \b_k)$
		\item
		$\rest{\Inv'}{\Q} \equiv \Inv$ and $\Inv'(q) = \chull(\rset_I)$
		\item
		$\rest{\G'}{\E} \equiv \G$ and
		\linebreak
		$\G'(\last{\pi}, q) = \rset_G$
	\end{enumin}
	Here $\rset_I = \bigcup_{t \in [t_{k-1}, t_k]} \reach(P_{k-1},\pi \cdot q,
	\linebreak[1]
	\rest{f}{[t_{k-1}, t]}, \eps)$, with $P_{k-1}$ as defined in \eqref{eq:syncset}, and $\rset_G = \tube(f,\eps)(t_k)$.
\end{definition}
Observe that $\exec(\H) \subseteq \exec(\modha_{q,f}^{P,\eps}(\H))$.
Next we define a tree capturing the \adha updates for every affine 
piece in $f$.

\begin{definition}
	Given an $n$-dimensional \adha $\H_0$ and a \pwa trajectory $f$ with $k$ 
	pieces, an \emph{exploration tree for $\H_0$ and $f$} is $\T 
	= (\V, \E)$ with $k$ layers (not counting the root node as a layer).
	Each node $\nu \in \V$ is represented as a tuple 
	$(\pi, \H, \mathit{mod}, \s)$ where
	$\pi$ is a path in an \adha $\H$,
	$\mathit{mod}$ is a triple of integers $(n_l, n_t, n_c)$, and
	$\s$ is a four-valued variable called \emph{status}
	(with meanings
	$0$\emph{:}~`unexplored',
	$1$\emph{:}~`activated',
	$2$\emph{:}~`explored',
	$3$\emph{:}~`deactivated').
\end{definition}
Observe that exploration trees for $\H_0$ and $f$ can only differ in the status.
The set of all exploration trees for $\H_0$ and $f$
is denoted by $\treeset{\H_0,f}$, and we call all trees
belonging to $\treeset{\H_0,f}$ \emph{similar}.
We may add a subscript to the elements in the node $\nu = (\pi,\H,\mathit{mod}, 
\s)$
(i.e., write $\pi_\nu$ etc.) for clarity.
We define, for an initial 
polyhedron $P$ and a value $\eps \in \Rnn$, an exploration tree $\T_0^{P,\eps} 
\in 
\treeset{\H_0,f}$ such that the 
\emph{root node} is
$([\,], \H_0, (0,0,0), 0)$, where $[\,]$ is the empty path.
Each node $(\pi, \H, (n_l, n_t, n_c), 0)$ in layer $i-1$, for $0 < i \leq k$, where $\Q_\H 
= \{ q_1, \ldots, q_m \}$, has $m+1$ \emph{child nodes}. The first $m$ nodes are:
\begin{align*}
	(\pi \cdot q_1, \modha_{q_1,f}^{P,\eps}(\H), (n_l, n_t + a_1, n_c + b_1), 0), 
	\ldots,\\
	(\pi \cdot q_m, \modha_{q_m,f}^{P,\eps}(\H), (n_l, n_t + a_m, n_c + b_m), 0),
\end{align*}
where $a_j = 0$ if $(\last{\pi}, q_j) \in \E_\H$
and $a_j  = 1$ otherwise, and $b_j$ is the number of constraint modifications
for invariants and guards with respect to $\H$, for every $1 \leq j \leq m$.
The last child node is:
\begin{align*}
	(\pi \cdot q, \modha_{q,f}^{P,\eps} (\H), (n_l +1, n_t + 1, n_c), 0),
\end{align*}
where $q$ is a new location with $\Flow(q) = (A_i, \b_i)$.

The paths from root to leaves in an exploration tree represent
all possible paths in updated \adha{s}, given the initial \adha $\H_0$, for
exploring membership of $f$.
An upper bound on the number of paths is $(m+k)^k$, where $m = |Q_0|$ is the number of locations in $\H_0$ and $k$ is the number of pieces in $f$.
The complexity for the membership query is in $\mathcal{O}(p(n))$ for some polynomial $p$ in the dimension $n$.
Hence the complexity for a membership check in each path of
the exploration tree is upper-bounded by $\mathcal{O}((m+k)^k k p(n))$.

We introduce a strategy for partial exploration that minimizes automaton
modifications (according to $\mathit{mod}$).
Given $\H_0$ and $f$, a \emph{decision strategy} is a function 
$D:  \treeset{\H_0,f} \times \V \to \V$ that determines the next node 
to be analyzed in an exploration tree.
A decision strategy is combined with a tree update in order to activate and 
explore nodes or discard useless nodes.
We say that a node is \emph{unexplored} when its status is $0$. 
We can explore a node when it is \emph{activated} (status $1$).
After exploration, if the membership query is positive, we set 
the status to $2$ (\emph{explored}) and otherwise to $3$ (\emph{deactivated}). 
Child nodes of deactivated nodes need not be explored further.

\begin{definition}
	Given an \adha $\H_0$, a \pwa trajectory $f$, a polyhedron $P$, and a value 
	$\eps \in \Rnn$, 
	an \emph{\eps-tree update} function, $\upd_{\eps}: \treeset{\H_0,f} \times \V 
	\to 
	\treeset{\H_0,f}$, 
	maps a tree $\T$ and a node $\nu$ in the $i$-th layer to a similar tree 
	such that
	$\s_\nu 
	= 3 $ if $\reach(P,\pi_\nu, \rest{f}{[0,t_i]},\eps) = \emptyset$, 
	and
	$\s_\nu = 2$ and $\s_{\nu'} = 1$  for every  node $\nu' \in \children{\nu}$
	otherwise, and leaves the status of all other nodes unchanged. 
\end{definition}

Given a set of nodes $W$, we  
denote by $\activated{W}$ the set of nodes with activated status, i.e., $\{ \nu 
\in 
W : \s_\nu = 1 \}$. Our decision strategy minimizing 
$\mathit{mod}$ is
$D(\T, \nu) = \displaystyle\arg \min_{\nu' \in \, \activated{\V}} 
\mathit{mod}_{\nu'},$
assuming that $\arg \min$ returns one node if several nodes minimize
the $\mathit{mod}$ value.

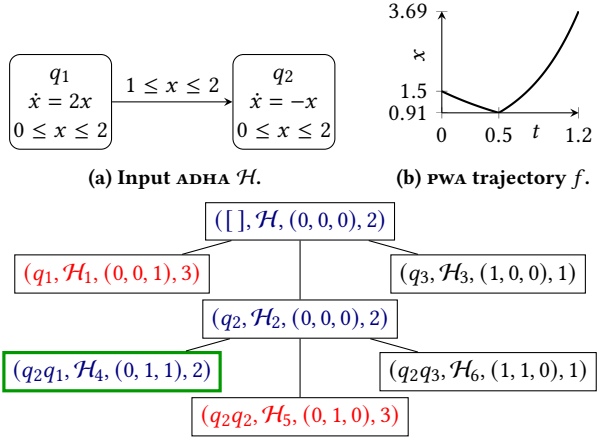
\begin{figure}
	\centering
	\begin{subfigure}[b]{.59\linewidth}
		\centering
		\begin{tikzpicture}
	\node[loc] (q1) {\begin{tabular}{@{} c @{}}$q_1$ \\ $\dot{x} = 2x$ \\ $0 \leq x \leq 2$\end{tabular}};
	\node[loc,right=16mm of q1] (q2) {\begin{tabular}{@{} c @{}}$q_2$ \\ $\dot{x} = -x$ \\ $0 \leq x \leq 2$\end{tabular}};
	\draw[t] (q1) to node[above] {$1 \leq x \leq 2$} (q2);
\end{tikzpicture}
		\caption{Input \adha $\H$.}
		\label{fig:automaton_example}
	\end{subfigure}
	\begin{subfigure}[b]{.39\linewidth}
		\centering
		\raisebox{-4mm}{\begin{tikzpicture}
	\begin{axis}[width=34mm,xtick={0, 0.5, 1.2},ytick={0.91, 1.5, 3.69},xtick align=center,ytick align=center,axis x line=bottom,axis y line=left,xlabel=\hspace*{7mm}$t$,ylabel=\hspace*{2mm}$x$,xlabel shift=-4.5mm,ylabel shift=-7mm]
		\addplot[thick,name path=first][domain=0:0.5] {1.5*e^(-x)};
		\addplot[thick,name path=second][domain=0.5:1.2] {1.5*e^(-0.5)*e^(2*(x-0.5))};
	\end{axis}
\end{tikzpicture}}
		\caption{\pwa trajectory $f$.}
		\label{fig:pwld_example}
	\end{subfigure}
	
	\vspace*{1mm}
	
	\begin{subfigure}{\linewidth}
		\centering
		\begin{tikzpicture}[sibling distance=25mm]
\node[tn] (n01) {\color{blue!50!black} $([\,], \H, (0, 0, 0), 2)$}
child { node[tn,yshift=8mm] {\color{red} $(q_1, \H_1, (0, 0, 1), 3)$} }
child { node[tn,yshift=2mm] {\color{blue!50!black} $(q_2, \H_2, (0, 0, 0), 2)$}
	child { node[tn,very thick,color=green!60!black,yshift=8mm] 
	{\color{blue!50!black} $(q_2 q_1, \H_4, (0, 1, 1), 2)$} }
	child { node[tn,yshift=2mm] {\color{red} $(q_2 q_2, \H_5,  (0, 1, 0), 3)$} }
	child { node[tn,yshift=8mm] {$(q_2 q_3, \H_6, (1, 1, 0), 1)$} } }
child { node[tn,yshift=8mm] {$(q_3, \H_3, (1, 0, 0), 1)$} };
\end{tikzpicture}
		\caption{Partial view of a search tree for $\eps = 0.1$.
			Red nodes are deactivated, blue nodes have been explored, and black 
			nodes are activated (and we omit their child nodes).
			The algorithm selects the path to the green node for a membership query.}
		\label{fig:search_tree}
	\end{subfigure}
	\caption{An \adha $\H$, a \pwa trajectory $f$ with two pieces, and 
	a (partial) exploration tree for $\H$ and $f$.}
	\label{fig:search_tree_example}
\end{figure}

\begin{example}
	Figure~\ref{fig:search_tree_example} shows an example of an intermediate state of an exploration tree for a given \adha and a \pwa trajectory with two pieces.
	The root node has been described before.
	For the remaining tree nodes we represent the
	automata $\H_i$ only symbolically.
	The first piece of $f$ follows the dynamics $\dot{x} = -x$ for $0.5$ time units.
	The available choices for the first automaton mode are $q_1$, $q_2$, or a new mode $q_3$; hence the root node expands to three new nodes.
	The node with path $q_1$ requires a modification of the invariant of $q_1$ because dwelling in that mode for $0.5$ time units is not possible otherwise.
	However, since the final reachable states do not intersect with the \eps-tube 
	(negative membership query), 
	this node status is set to $3$ (deactivated) and none of the child
	nodes are explored further.
	The node with the new location $q_3$ has a ``location entry'' in the 
	modification tuple.
	The node with path $q_2$ does not require any modifications (i.e., $\H_2 = 
	\H$) and is hence chosen as the next node for exploration.
	Now we consider the second piece with dynamics $\dot{x} = 2x$ for $0.7$ time units.
	Again we have the choice between the existing locations and a new location 
	$q_3$.
	The exploration works like before, only that this time we need to add a 
	transition from $q_2$ to the next location in all three cases (since $q_2$ does 
	not have any outgoing transitions in $\H_2 = \H$ yet).
	The $2$-leaf with the path $q_2q_1$ has the highest priority and we perform a membership query for it.
	In this case, the query returns a positive answer and the algorithm outputs the 
	automaton $\H_4$, which looks like $\H$ but with an additional transition and 
	an extended invariant in location $q_1$.
\end{example}

\begin{algorithm}[t]
	\caption{Hybrid model update}
	\label{alg:update}
	\begin{algorithmic}[1]
		\REQUIRE An \adha $\H$, a \pwa trajectory $f$ and $\eps \in \Rnn$.
		
		\ENSURE An \adha $\H'$ such that it $\eps$-captures $f$.
		
		\STATE $\T$ := $\initialize(\H,f,\eps)$
		
		\STATE $\nu$ := $\rootn(\T)$ \label{line:rootnode}
		
		\STATE $\T$ := $\upd_\eps(\T,\nu)$
		
		\WHILE{$\nu$ not in bottom layer with $\s_\nu = 2$}
			
			\STATE $\nu$ := $D(\T, \nu)$ \label{line:nu}
			
			\STATE $\T$ := $\upd_\eps(\T,\nu)$ \label{line:T}
			
			\STATE $\H'$ := $\H_\nu$
			
		\ENDWHILE
		
		\RETURN{$\H'$}
	\end{algorithmic}
\end{algorithm}
\begin{algorithm}[t]
	\caption{Synthesis of \adha from \pwa trajectories}
	\label{alg:synthesis}
	\begin{algorithmic}[1]
		\REQUIRE A finite set of \pwa trajectories $F$ and $\eps \in \Rnn$.
		
		\STATE $\H$ := $\emptyset$ \COMMENT{empty automaton with no 
		location}
		
		\FOR{$f$ in $F$}
			
			\STATE $\H$ :=  Algorithm~\ref{alg:update}($\H, f, \eps$)
			
		\ENDFOR
		
		\RETURN{$\H$}
	\end{algorithmic}
\end{algorithm}

Algorithm~\ref{alg:update} shows the procedure for a model update given an 
initial \adha $\H$, a \pwa trajectory $f$, and a value $\eps \in \Rnn$. 
The function $\initialize$ constructs the exploration tree $\T_0^{P,\eps}$ for the 
polyhedron $P = \tube(f,\eps)(0)$.
Then the algorithm starts 
exploring from the root node (line~\ref{line:rootnode}) and subsequently explores nodes driven 
by the decision strategy (line~\ref{line:nu}), which chooses activated nodes with minimum 
$\mathit{mod}$ component and iteratively activates every child nodes and deactivates the current node or sets it to explored (line~\ref{line:T}).
The algorithm returns the updated 
model $\H'$.
Finally, Problem~\ref{problem:synthesis} is solved by iteratively running 
Algorithm~\ref{alg:update} over every trajectory $f \in F$ and
modifying the \adha, as shown in Algorithm~\ref{alg:synthesis}.

\begin{proposition}
	Given an \adha $\H$, a \pwa trajectory $f$, and a value $\eps \in \Rnn$, 
	Algorithm~\ref{alg:update} provides an updated \adha $\eps$-capturing $f$ 
	and minimizing the number of modifications.
\end{proposition}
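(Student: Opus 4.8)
The plan is to prove two things: (i) \emph{correctness}, i.e.\ Algorithm~\ref{alg:update} terminates and its output $\H'$ admits an execution $\sigma\in\exec(\H')$ with $d(f,\sigma)\leq\eps$; and (ii) \emph{minimality}, i.e.\ among all bottom-layer nodes of the exploration tree $\T_0^{P,\eps}$ whose membership query is positive, the returned node $\nu$ carries a lexicographically smallest modification tuple $\mathit{mod}_\nu$ --- which is the precise reading of ``minimizing the number of modifications''.

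For (i), I would first observe that $\upd_\eps$ never resets a status and only moves a node from $0$ to $1$ and then, at most once, to $2$ or $3$; since the decision strategy $D$ only ever picks activated nodes and $\T_0^{P,\eps}$ is a finite tree, the while loop performs at most $|\V|$ iterations, and --- as it can exit only by selecting a bottom-layer node with status $2$, i.e.\ a positive bottom leaf --- it terminates correctly as soon as $D$ is never invoked on an empty activated set $\activated{\V}$. To ensure the latter I would consider the \emph{fully fresh path} $\pi^\star=q_1^\star,\dots,q_k^\star$, obtained by taking the new-location child at every layer. By the definition of $\modha$, the fresh location added at layer $i$ has flow $(A_i,\b_i)$, invariant equal to the hull of the tube-confined synchronized reachable states of the $i$-th piece, and incoming guard equal to the matching $\eps$-tube slice; hence $f$ itself is an execution following $\pi^\star$, synchronized with $f$, starting in $P=\tube(f,\eps)(0)$ and staying inside $\tube(f,\eps)$, so $f(t_i)\in\reach(P,\pi^\star_i,\rest{f}{[0,t_i]},\eps)$ for every prefix $\pi^\star_i$ of $\pi^\star$. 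Thus every fresh-path node is given status $2$ once explored, so after the root is explored some fresh-path node is always activated (or already explored with its fresh child activated); hence $D$ is always defined, the activated set never empties before the loop stops, and the loop terminates at a positive bottom leaf $\nu$. For that $\nu$, $\s_\nu=2$ means $\reach(P,\pi_\nu,f,\eps)=\reach(P,\pi_\nu,\rest{f}{[0,t_k]},\eps)\neq\emptyset$, so by the earlier proposition that non-emptiness of $\reach$ implies $\eps$-capture, $\H'=\H_\nu$ $\eps$-captures $f$; an induction using $\exec(\H)\subseteq\exec(\modha_{q,f}^{P,\eps}(\H))$ additionally gives $\exec(\H)\subseteq\exec(\H')$, which Algorithm~\ref{alg:synthesis} needs to preserve previously captured trajectories.

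For (ii), I would use two structural facts. First, $\mathit{mod}$ is componentwise non-decreasing from a node to any of its children (all increments, including the triple $(1,1,0)$ for the fresh child, are nonnegative), hence non-decreasing along every root-to-leaf path in the lexicographic order. Second, emptiness propagates: if a prefix path $\pi'$ of length $j$ has $\reach(P,\pi',\rest{f}{[0,t_j]},\eps)=\emptyset$, then every path that extends $\pi'$ has empty $\reach$ with $f$, by the inductive definition of the sets $P_i$ in~\eqref{eq:syncset}. Now suppose for contradiction that some positive bottom leaf $\mu$ has $\mathit{mod}_\mu<_{\mathrm{lex}}\mathit{mod}_\nu$, walk down the root-to-$\mu$ path $\mu_0,\dots,\mu_k=\mu$, and let $\mu_j$ be its deepest node with nonzero status at termination (it exists since the root has status $2$). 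If $j=k$: status $3$ contradicts positivity of $\mu$ by the second fact; status $2$ would have stopped the loop at $\mu$, forcing $\nu=\mu$ and contradicting $\mathit{mod}_\mu<_{\mathrm{lex}}\mathit{mod}_\nu$; status $1$ means $\mu$ is still activated at termination, and since the final selected node $\nu$ is a bottom leaf with no children, $\upd_\eps$ only removed $\nu$ from the activated set, so every node activated at termination --- in particular $\mu$ --- was activated at the final selection, where $D$ chose $\nu$ as an $\mathit{mod}$-minimizer, giving $\mathit{mod}_\nu\leq_{\mathrm{lex}}\mathit{mod}_\mu$, a contradiction. If $j<k$: then $\mu_{j+1}$ has status $0$, so $\mu_j$ is not explored (an explored node activates all its children), hence has status $1$ or $3$; status $3$ contradicts positivity of $\mu$ by both facts (a prefix of $\pi_\mu$ has empty $\reach$); status $1$ makes $\mu_j$ an activated node at termination with $\mathit{mod}_{\mu_j}\leq_{\mathrm{lex}}\mathit{mod}_\mu<_{\mathrm{lex}}\mathit{mod}_\nu$, again contradicting that $\nu$ minimizes $\mathit{mod}$ over the activated set. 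Hence no such $\mu$ exists and $\mathit{mod}_\nu$ is lex-minimal.

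I expect the main obstacle to be the minimality argument, and within it the bookkeeping around the decision strategy: one must carefully relate ``$\nu$ minimizes $\mathit{mod}$ over the activated set at the selection step'' to the state of the activated set at termination, and invoke the emptiness-propagation fact at exactly the right prefix and time point so that a deactivated ancestor of $\mu$ genuinely certifies a negative membership query at $\mu$ itself. By contrast, termination is conceptually routine once the fresh-path witness is in place --- though worth spelling out, since otherwise $D$ could be invoked on an empty activated set --- and the $\eps$-capture conclusion is immediate from the already-stated proposition on non-emptiness of $\reach$.
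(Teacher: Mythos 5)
Your proposal is correct and follows essentially the same route as the paper's proof: termination is guaranteed by the all-new-locations path (whose fresh invariants and guards contain $f$ by construction, so its membership query is always positive), and $\eps$-capture of the returned automaton follows from non-emptiness of $\reach$ at a status-$2$ bottom leaf. The paper's minimality argument is only the one-line assertion that $\mathit{mod}$ is minimal ``due to the decision strategy,'' whereas you supply the actual justification (monotonicity of $\mathit{mod}$ along root-to-leaf paths, propagation of emptiness to descendants, and the case analysis on the deepest non-unexplored ancestor of a hypothetical better leaf) --- a welcome tightening, but not a different approach.
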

\begin{proof}
	Given an \adha $\H$, a \pwa trajectory $f$, and a value $\eps \in 
	\Rnn$, Algorithm~\ref{alg:update} proceeds as follows.
	First, the algorithm constructs an initial exploration tree 
	$\T_0^{P,\eps}$ (line $1$) whose nodes contain all the possible 
	modifications of $\H$ with unexplored status ($0$).
	Then, the algorithm sets $\nu$ as the root node (line $2$) and 
	applies the \eps-tree update over the initial exploration tree and 
	$\nu$ (line $3$). This update sets the status for all nodes in the 
	first layer to $1$ and for the root node to $2$ because 
	$\reach(P,\emptyset, \rest{f}{[0,0]},\eps) = P$.
	Next, the algorithm iterates (line $4$) as follows.
	The decision strategy $D(\T, \nu)$ selects the node of the search 
	tree with minimum $\mathit{mod}$ value and activated status 
	($1$).
	Then, the \eps-tree update requires to check if the \adha 
	$\H_\nu$ in the $k$-th layer of the exploration tree 
	\eps-captures the $k$ first pieces of $f$.
	If these pieces are not captured, the status of the 
	node is deactivated (set to $3$) and all child nodes will remain 
	unexplored (with status $0$) forever.
	If $\H_\nu$ \eps-captures $\rest{f}{[0,t_k]}$, the 
	status of the node is set to $2$ (explored) and the status of all child nodes
	is activated ($1$).
	The loop runs until $\nu$ is a node at the 
	bottom layer with status $2$, which means that $\H_\nu$ 
	\eps-captures $f$ and that $\mathit{mod}$ is minimum due to
	the decision strategy.
	The algorithm terminates because, in the 
	worst case, it will choose the path in the search tree 
	where a new mode is added for each piece of $f$;
	clearly the
	\adha at the leaf of that path \eps-captures $f$.
\end{proof}

\begin{theorem}
	Given an \adha $\H$, a set $F$ of \pwa trajectories, and a value $\eps \in 
	\Rnn$, Algorithm~\ref{alg:synthesis} solves Problem~\ref{problem:synthesis}.
\end{theorem}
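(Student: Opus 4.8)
The plan is to reduce the theorem to the preceding proposition on Algorithm~\ref{alg:update} together with a monotonicity argument, and to make the minimality claim precise. Problem~\ref{problem:synthesis} requires an \adha that \eps-captures the entire set $F$ and satisfies a minimality criterion; Algorithm~\ref{alg:synthesis} builds such an automaton incrementally from the empty automaton, processing one trajectory at a time. I would therefore establish three things in turn: correctness (the output \eps-captures every $f \in F$), termination, and minimality.

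For correctness, the key ingredient is that every model update is monotone on executions. The remark after the definition of the $q$-update states $\exec(\H) \subseteq \exec(\modha_{q,f}^{P,\eps}(\H))$, and Algorithm~\ref{alg:update} performs only a finite chain of such updates --- one per piece of $f$, along a single root-to-leaf path of the exploration tree --- so its output $\H'$ satisfies $\exec(\H) \subseteq \exec(\H')$. In particular, if $\H$ \eps-captures some trajectory $g$, witnessed by $\sigma \in \exec(\H)$ with $d(g,\sigma) \le \eps$, then the same $\sigma$ lies in $\exec(\H')$, so $\H'$ \eps-captures $g$ as well. Writing $F = \{f_1,\dots,f_N\}$ in processing order, $\H^{(0)} = \emptyset$, and $\H^{(j)}$ for the automaton returned by Algorithm~\ref{alg:update} on input $(\H^{(j-1)}, f_j, \eps)$, I would then run an induction on $j$ with invariant ``$\H^{(j)}$ \eps-captures $f_1,\dots,f_j$'': the trajectory $f_j$ is \eps-captured by the preceding proposition, and $f_1,\dots,f_{j-1}$ stay \eps-captured by the monotonicity just described. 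Hence the returned automaton $\H^{(N)}$ \eps-captures $F$. Termination is then immediate: $F$ is finite, and each call to Algorithm~\ref{alg:update} terminates --- as argued in the proof of the preceding proposition, in the worst case the decision strategy eventually picks the path that introduces a fresh location for every piece of $f_j$, which is a leaf of the finite exploration tree, so the while-loop halts.

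The part I expect to demand the most care, and which determines whether the theorem is provable exactly as stated, is the minimality claim. Here I would first make explicit that the relevant criterion is the incremental, lexicographic one realised by the decision strategy: at each step the update chosen for $f_j$ adds the fewest new locations, then (among those) the fewest new transitions, then the fewest relaxed invariant/guard constraints, while retaining all behaviour captured so far. That each individual update is minimal in this sense is exactly the content of the preceding proposition, since $D$ always expands an activated node with the lexicographically smallest $\mathit{mod} = (n_l, n_t, n_c)$ and the exploration tree enumerates all admissible modifications of the current automaton. I would then observe that Algorithm~\ref{alg:synthesis} does nothing beyond chaining these per-step-optimal updates --- it makes no further choices --- so the whole sequence is minimal at every step, which is precisely the minimality criterion of Problem~\ref{problem:synthesis}. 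I would be careful to state this as a \emph{local}, order-dependent notion: it does not assert a globally minimum automaton over all orderings of $F$, since a greedy choice for $f_j$ can become suboptimal once later trajectories are seen.

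The main obstacle is thus less a technical one than a matter of fixing the (deliberately informal) minimality criterion so that the greedy construction of Algorithm~\ref{alg:synthesis} verifiably meets it; once that is done, the correctness and termination arguments are a clean induction plus the monotonicity remark, and the only remaining subtlety is that the reachability-based membership checks underlying Algorithm~\ref{alg:update} are used here at the idealised level where emptiness of $\reach(P,\pi,f,\eps)$ is treated as decidable (as in the definition of the \eps-tree update), so no additional approximation error enters the synthesis-level argument.
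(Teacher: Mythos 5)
Your proposal is correct and follows exactly the route the paper intends: the paper states this theorem without an explicit proof, relying on the preceding proposition about Algorithm~\ref{alg:update} together with the remark $\exec(\H) \subseteq \exec(\modha_{q,f}^{P,\eps}(\H))$, and your induction over the trajectories of $F$ using that monotonicity, plus the per-trajectory termination and greedy lexicographic minimality already established in the proposition's proof, is precisely the missing chaining argument. Your caveat that the minimality criterion must be read as the local, order-dependent one realised by the decision strategy (not a global optimum over all orderings of $F$) accurately reflects the paper's deliberately informal notion.
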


\section{Implementation and case study}\label{sec:evaluation}

In this section we describe our implementation and evaluate it:
in the first two examples we obtain the \pwa trajectories from
random executions with perturbed dynamics from a given \adha model;
in a third example we construct the \pwa trajectories from time series.

\subsection{Implementation}\label{sec:implementation}

We implemented our approach in \emph{HySynth}~\cite{hysynth} where
we wrote the high-level synthesis algorithm in Python and the low-level
algorithms in Julia.
For the ODE optimization (both in Section~\ref{sec:ts2function} and Algorithm~\ref{alg:linear_singleton}) we
use the libraries \texttt{Optim.jl}~\cite{mogensen2018optim}
(which uses Brent's method~\cite{Brent71} to find a root in
a bracketing interval and guarantees convergence for functions computable
within the interval) and
\texttt{DifferentialEquations.jl}~\cite{rackauckas2017differentialequations}
as follows (assuming linear dynamics
without loss of generality).
Given two $n$-dimensional executions $\dot{\x} = A\x$, $\x(0) = \x_0$ and
$\dot{\y} = B\y$, $\y(0) = \y_0$, we construct a $3n$-dimensional execution
$\dot{\z} = C\z$, $\z(0) = \z_0$ where
\[
	C = \begin{pmatrix} A & 0 & 0 \\ 0 & B & 0 \\ A & -B & 0 \end{pmatrix}
	\qquad
	\z_0 = \begin{pmatrix} \x_0 \\ \y_0 \\ \x_0 - \y_0 \end{pmatrix}.
\]
We are interested in the projection of $\z(t)$ onto the last $n$ dimensions.
Calling this projection $\w(t)$, the norm of $\w(t)$ describes the distance
between $\x(t)$ and $\y(t)$, i.e., $\norm{\w(t)} = d(\x(t), \y(t))$.
We query the solver for each dimension of $\w(t)$ to find the maximum distance.

We use \texttt{JuliaReach}~\cite{BogomolovFFPS19} for the set computations and
reachability analysis.
As mentioned in Section~\ref{sec:refinement}, the polytopes over- and
under-approximating the synchronized reachable sets constructed during the
membership query grow in complexity, especially for input trajectories with many
pieces.
We simplify the sets after each piece, i.e., we under-approximate an
under-approximation (for which \texttt{JuliaReach} computes a polytope from
support vectors in template directions) and over-approximate an
over-approximation (for which we implemented an algorithm from~\cite{GuibasNZ03}
to compute a zonotope in template directions) with octagonal directions
(i.e., axis-parallel or diagonal constraints in two dimensions).

\subsection{Evaluation}

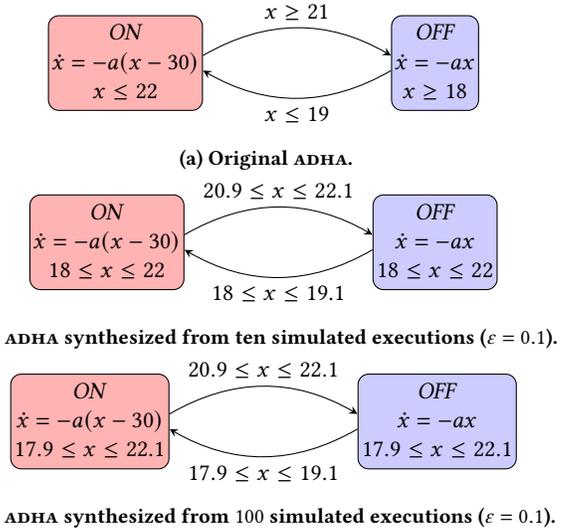
\begin{figure}
	\centering
	\begin{subfigure}{\linewidth}
		\centering
		\begin{tikzpicture}
	\node[loc,fill=\Con] (on) {\begin{tabular}{@{} c @{}}\textit{ON} \\ $\dot{x} = -a(x-30)$ \\ $x \leq 22$\end{tabular}};
	\node[loc,fill=\Coff,right=25mm of on] (off) {\begin{tabular}{@{} c @{}}\textit{OFF} \\ $\dot{x} = -ax$ \\ $x \geq 18$\end{tabular}};
	\draw[t,bend left] ($(on.east) + (0,1mm)$) to node[above] {$x \geq 21$} ($(off.west) + (0,1mm)$);
	\draw[t,bend left] ($(off.west) + (0,-1mm)$) to node[below] {$x \leq 19$} ($(on.east) + (0,-1mm)$);
\end{tikzpicture}
		\caption{Original \adha.}
		\label{fig:thermostat_original}
	\end{subfigure}
	\begin{subfigure}{\linewidth}
		\centering
		\begin{tikzpicture}
	\node[loc,fill=\Con] (on) {\begin{tabular}{@{} c @{}}\textit{ON} \\ $\dot{x} = -a(x-30)$ \\ $18 \leq x \leq 22$\end{tabular}};
	\node[loc,fill=\Coff,right=25mm of on] (off) {\begin{tabular}{@{} c @{}}\textit{OFF} \\ $\dot{x} = -ax$ \\ $18 \leq x \leq 22$\end{tabular}};
	\draw[t,bend left] ($(on.east) + (0,1mm)$) to node[above] {$20.9 \leq x \leq 22.1$} ($(off.west) + (0,1mm)$);
	\draw[t,bend left] ($(off.west) + (0,-1mm)$) to node[below] {$18 \leq x \leq 19.1$} ($(on.east) + (0,-1mm)$);
\end{tikzpicture}
		\caption{\adha synthesized from ten simulated executions ($\eps = 0.1$).}
		\label{fig:thermostat_synthesized_10}
	\end{subfigure}
	\begin{subfigure}{\linewidth}
		\centering
		\begin{tikzpicture}
	\node[loc,fill=\Con] (on) {\begin{tabular}{@{} c @{}}\textit{ON} \\ $\dot{x} = -a(x-30)$ \\ $17.9 \leq x \leq 22.1$\end{tabular}};
	\node[loc,fill=\Coff,right=25mm of on] (off) {\begin{tabular}{@{} c @{}}\textit{OFF} \\ $\dot{x} = -ax$ \\ $17.9 \leq x \leq 22.1$\end{tabular}};
	\draw[t,bend left] ($(on.east) + (0,1mm)$) to node[above] {$20.9 \leq x \leq 22.1$} ($(off.west) + (0,1mm)$);
	\draw[t,bend left] ($(off.west) + (0,-1mm)$) to node[below] {$17.9 \leq x \leq 19.1$} ($(on.east) + (0,-1mm)$);
\end{tikzpicture}
		\caption{\adha synthesized from $100$ simulated executions ($\eps = 0.1$).}
		\label{fig:thermostat_synthesized_100}
	\end{subfigure}
	\caption{\adha models of the heater system. Numbers are rounded to one decimal place.}
\end{figure}

We consider an \adha that models a \textbf{heater} with two locations
``\textit{ON}'' and ``\textit{OFF}'', as depicted in Figure~\ref{fig:thermostat_original} with
parameter value $a = 0.1$.

Next we describe how we sampled executions from the model.
The inputs to the simulation procedure are
\begin{enumin}{,}{,and}
	\item an \adha (here: the heater model)
	\item a desired path length (here: $6$)
	\item a maximum dwell time per location (here: $7$)
	\item a time step (here: $0.05$)
	\item a maximum perturbation (here: $0.001$)	
\end{enumin}
We first sample an initial location $q_0$ and an initial (continuous) state
$\x_0$ from $\Inv(q_0)$.
Then we repeat the following loop.
Given a location $q$ and a state $\x$, we first compute a matrix $A$ by
perturbing the dynamics matrix $\Flow(q)$ (technically, we only perturb
non-zero entries).
Then we compute the discrete-time successor of $\x$ with the fixed time step
$t$ (via $\x' := \lreach(\{\x\}, A, t)$), and we check which of the outgoing
transitions of $q$ are enabled for this new state.
We continue computing successor states and collecting enabled transitions
until either the state leaves $\Inv(q)$ or we exceed the maximum dwell time.
Then we choose a random transition together with a random time point of those
that were enabled.
The above loop terminates if either there is no transition enabled or we
exceed the desired path length.

We applied the above procedure to obtain 100 random executions from the
heater model.
Then we first learned a model from the first ten executions and then
continued modifying the resulting \adha with the remaining 90
executions, where we used a precision value $\eps = 0.1$.
(Note that our algorithmic framework behaves exactly the same way as if
we had learned an \adha from the 100 executions at once.
The split into two stages is only for illustrative purposes.)
We show the intermediate and the final result obtained with our implementation
in Figure~\ref{fig:thermostat_synthesized_10} and Figure~\ref{fig:thermostat_synthesized_100} respectively, and random simulations
in Figure~\ref{fig:thermostat_simulations}.

\begin{figure}
	\begin{subfigure}[t]{.48\linewidth}
		\centering
		\includegraphics[width=\textwidth,keepaspectratio]{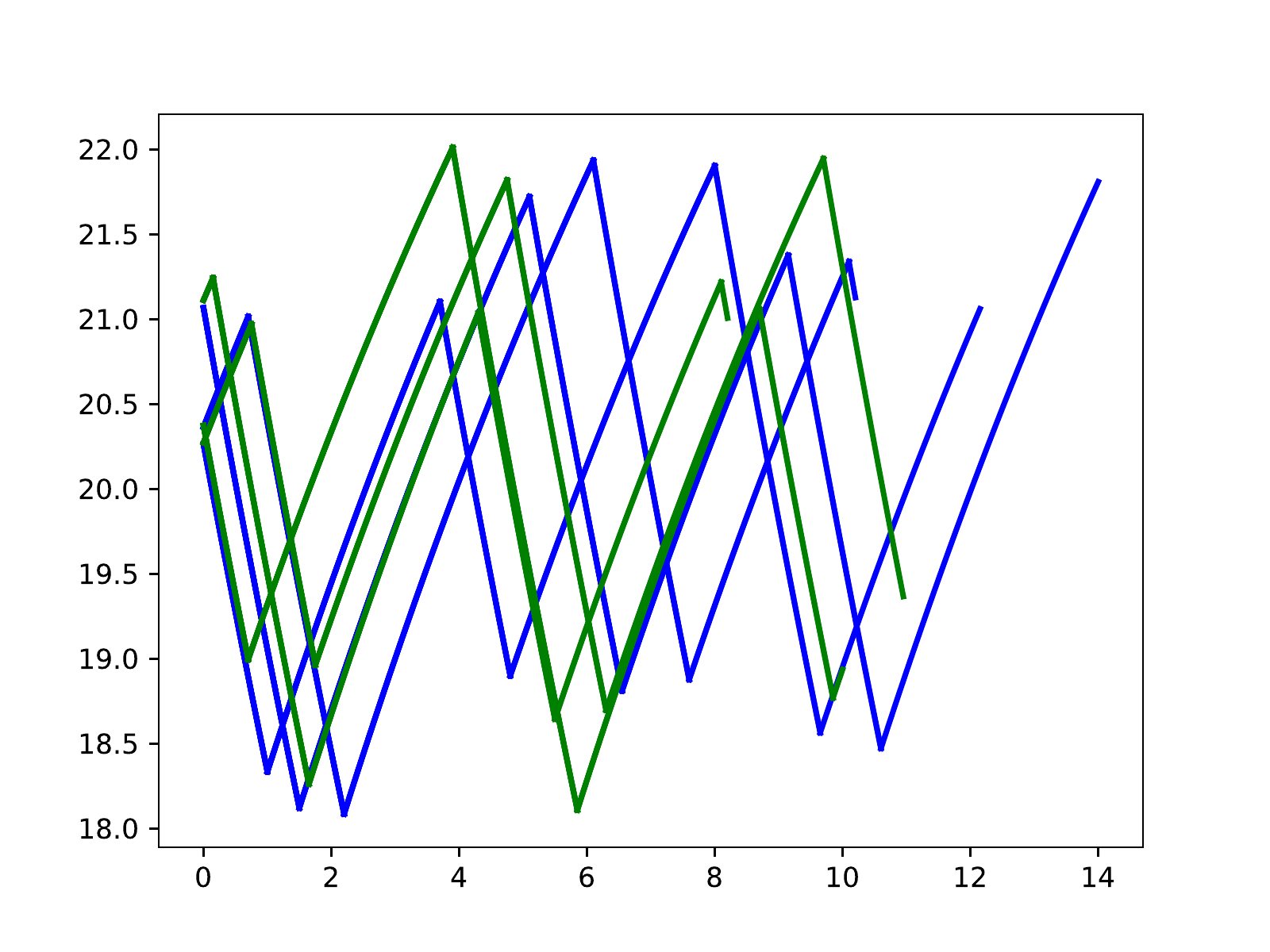}
		\caption{Heater model: Variable $\x$ (ordinate) over time (abscissa).}
		\label{fig:thermostat_simulations}
	\end{subfigure}
	\hfill
	\begin{subfigure}[t]{.48\linewidth}
		\centering
		\includegraphics[width=\textwidth,keepaspectratio]{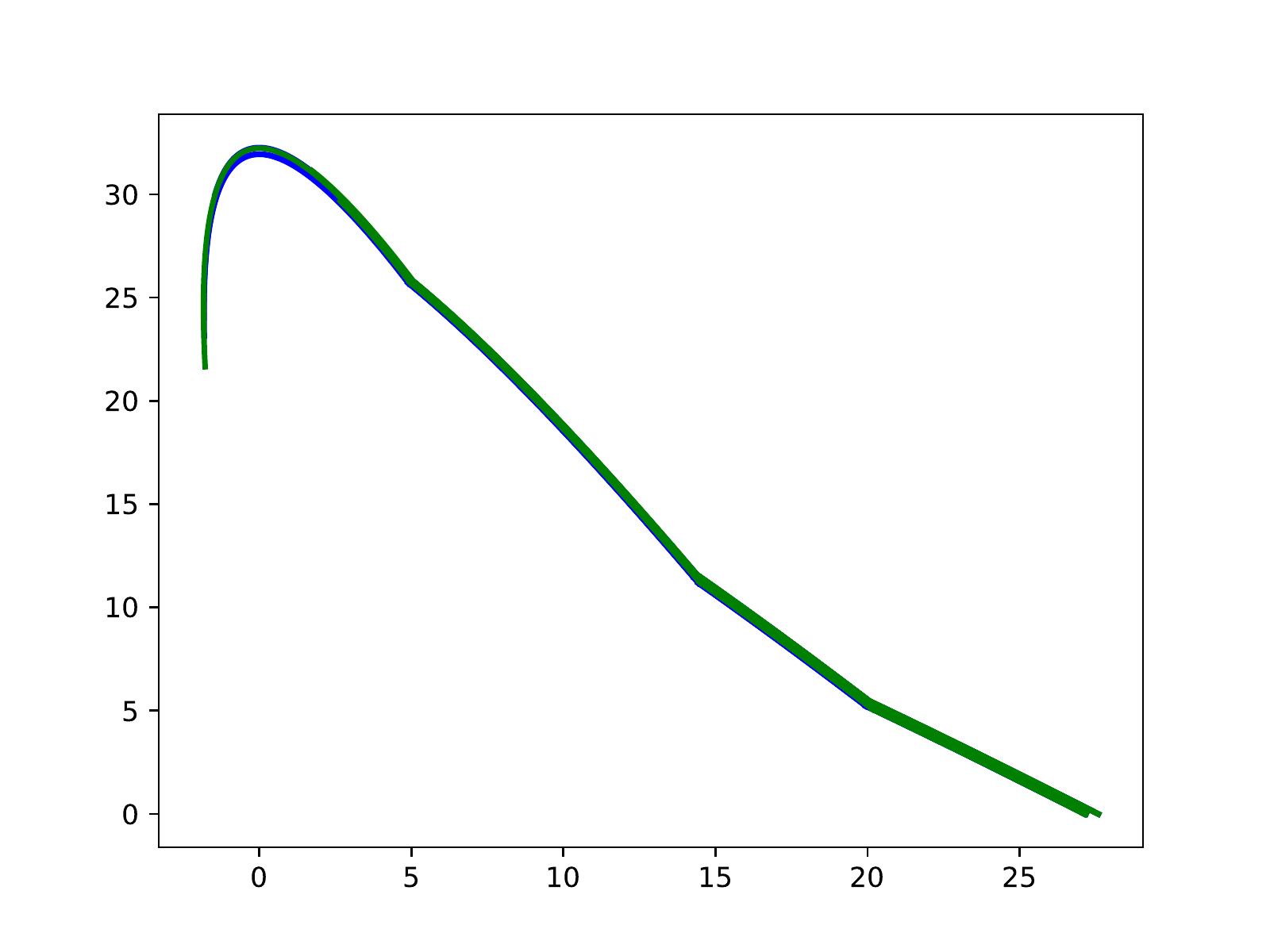}
		\caption{Gearbox model: Phase portrait.}
		\label{fig:gearbox_simulations}
	\end{subfigure}
	\caption{Random simulations:
	Three blue simulations are obtained from the original model and three green simulations are obtained from the synthesized model.}
	\label{fig:simulations}
\end{figure}

The first observation is that the discrete structure of the resulting
\adha matches exactly the structure of the original model.
The reason why the dynamics matrices of the locations is the same as in
the original model is because we did not perturb the dynamics of the
very first execution in order to obtain a legible flow representation.
Still, even though the algorithm is confronted with slightly different
dynamics in all other executions, it does not add further locations to the
\adha, thanks to the precision value \eps.
As can be seen, the invariants and guards in the final \adha
over-approximate the original guards by \eps, which is expected by construction.

\begin{figure}
	\centering
	\begin{subfigure}{\linewidth}
		\centering
		\begin{tikzpicture}
	\node[loc,fill=\cI] (q1) {\begin{tabular}{@{} c @{}}$\dot{\x} = A_1\x$ \\ $v \geq 20$\end{tabular}};
	\node[loc,fill=\cII,right=22mm of q1] (q2) {\begin{tabular}{@{} c @{}}$\dot{\x} = A_2\x$ \\ $14 \leq v \leq 23$\end{tabular}};
	\node[loc,fill=\cIII,right=4mm of q2,yshift=-6mm] (q3) {\begin{tabular}{@{} c @{}}$\dot{\x} = A_3\x$ \\ $5 \leq v \leq 19$\end{tabular}};
	\node[loc,fill=\cIV,left=27mm of q3] (q4) {\begin{tabular}{@{} c @{}}$\dot{\x} = A_4\x$ \\ $v \leq 13$\end{tabular}};
	\draw[t] (q1) to node[above] {$v = 20$} (q2);
	\draw[t,bend left] (q2) to node[above right=-1mm] {$v = 14$} (q3);
	\draw[t] (q3) to node[below] {$v = 5$} (q4);
\end{tikzpicture}
		\caption{Original \adha.}
		\label{fig:gearbox_original}
	\end{subfigure}
	
	\begin{subfigure}{\linewidth}
		\centering
		\begin{tikzpicture}
	\node[loc,fill=\cI] (q1) {\begin{tabular}{@{} c @{}}$\dot{\x} = A_1\x$ \\ $20 \leq v \leq 28$ \\ $0 \leq w \leq 6$\end{tabular}};
	\node[loc,fill=\cII,right=24mm of q1] (q2) {\begin{tabular}{@{} c @{}}$\dot{\x} = A_2\x$ \\ $14 \leq v \leq 20$ \\ $4 \leq w \leq 12$\end{tabular}};
	\node[loc,fill=\cIII,right=4mm of q2,yshift=-8mm] (q3) {\begin{tabular}{@{} c @{}}$\dot{\x} = A_3\x$ \\ $5 \leq v \leq 15$ \\ $10 \leq w \leq 26$\end{tabular}};
	\node[loc,fill=\cIV,left=25mm of q3] (q4) {\begin{tabular}{@{} c @{}}$\dot{\x} = A_4\x$ \\ $-2 \leq v \leq 5$ \\ $18 \leq w \leq 32$\end{tabular}};
	\draw[t] (q1) to node[above=-1mm] {\begin{tabular}{@{} c @{}}$20 \leq v \leq 20$ \\ $4 \leq w \leq 6$\end{tabular}} (q2);
	\draw[t,bend left] (q2) to node[above right=-2mm] {\begin{tabular}{@{} c @{}}$14 \leq v \leq 15$ \\ $10 \leq w \leq 12$\end{tabular}} (q3);
	\draw[t] (q3) to node[below=-1mm] {\begin{tabular}{@{} c @{}}$5 \leq v \leq 5$ \\ $25 \leq w \leq 26$\end{tabular}} (q4);
\end{tikzpicture}
		\caption{\adha synthesized from ten simulated executions ($\eps = 0.1$).}
		\label{fig:gearbox_synthesized_10}
	\end{subfigure}
	\caption{\adha models of the gearbox system.
	Numbers are rounded to integers; constraints are approximated by boxes.}
	\label{fig:gearbox}
\end{figure}
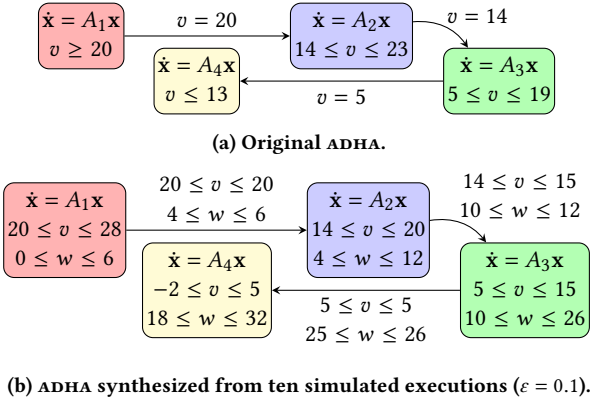

We also applied the algorithm to a two-dimensional \textbf{gearbox} model with 
variables
$v$ and $w$ from~\cite{PrabhakarS16} and we refer to that reference for further
details about the model.
We present the results for 10 simulations, a maximum perturbation of $0.0001$,
and initial states sampled from the red location and the set
$26 \leq v(0) \leq 28, w(0) = 0$ in Figure~\ref{fig:gearbox},
and random simulations in Figure~\ref{fig:gearbox_simulations}.
Overall we see a similar algorithmic performance as for the heater model.

\begin{figure}
	\centering
	\begin{subfigure}{.48\linewidth}
		\centering
		\includegraphics[width=\textwidth,keepaspectratio,clip,trim=12mm 0mm 16mm 0mm]{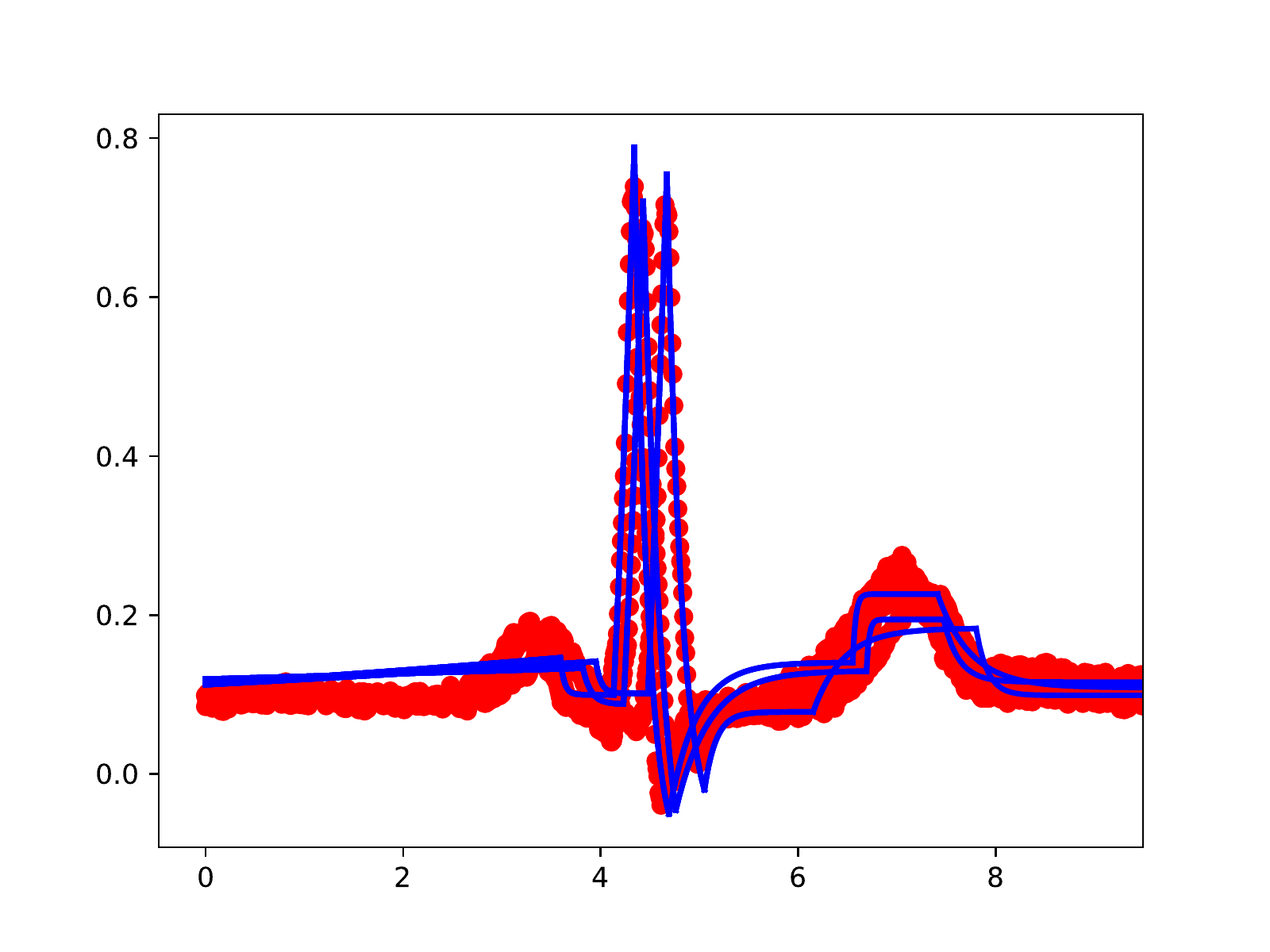}
		\caption{$\delta = 0.05$.}
		\label{fig:time_series_approximation_0_05}
	\end{subfigure}
	\hfill
	\begin{subfigure}{.48\linewidth}
		\centering
		\includegraphics[width=\textwidth,keepaspectratio,clip,trim=12mm 0mm 16mm 0mm]{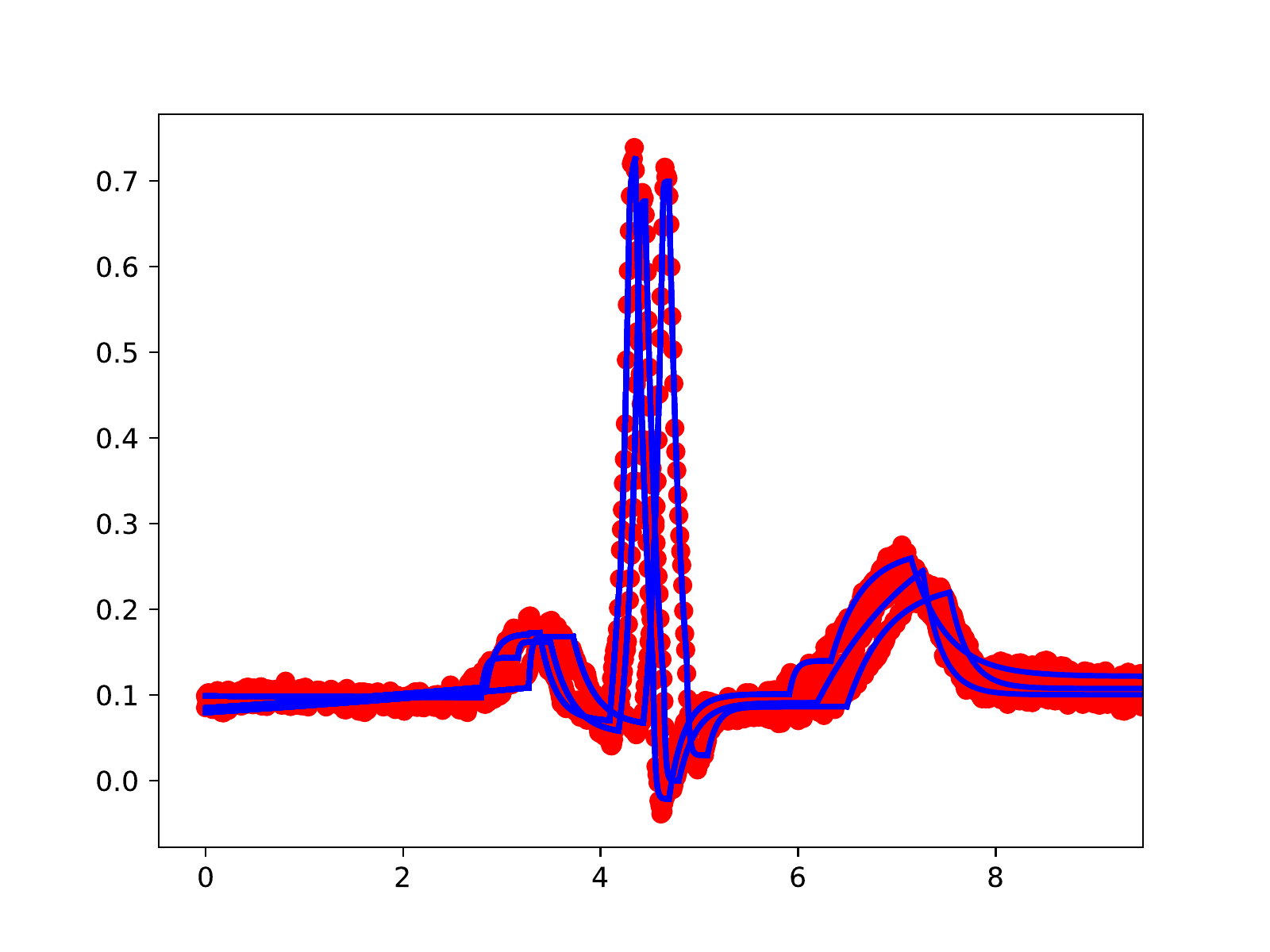}
		\caption{$\delta = 0.02$.}
		\label{fig:time_series_approximation_0_02}
	\end{subfigure}
	
	\begin{subfigure}{.48\linewidth}
		\centering
		\includegraphics[width=\textwidth,keepaspectratio,clip,trim=12mm 0mm 16mm 0mm]{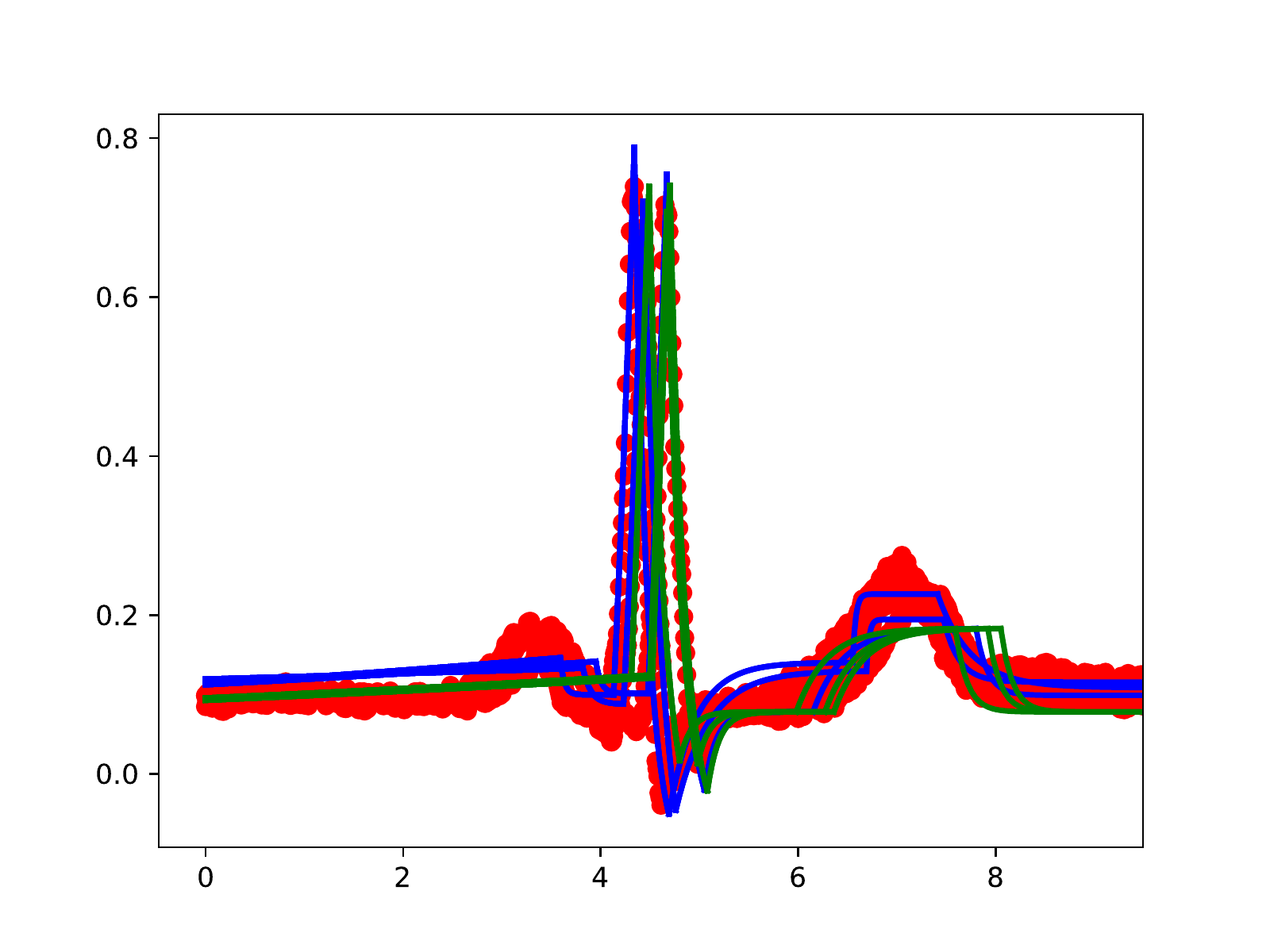}
		\caption{Simulations.}
		\label{fig:ecg_time_series_and_trajectories_0_05}
	\end{subfigure}
	\caption{\pwa trajectories (blue) for three time series (red)
	and different values of $\delta$.
	The last plot shows three simulations from the synthesized
	\adha (green).}
	\label{fig:time_series_approximation}
\end{figure}

In another experiment we investigate the conversion of time series to \pwa
trajectories.
We consider three \textbf{ECG signals} from the PhysioBank
database~\cite{GoldbergerEtal00}.
For the distance value $\delta = 0.05$ we obtained three \pwa trajectories of
length $7$ in 158~seconds.
For $\delta = 0.02$ we obtained \pwa trajectories of
respective lengths $10$, $11$, and $13$ in $220$~seconds.
Using $\eps = 0.1$ we
obtained an \adha with $8$ and $7$ locations, respectively.
Figure~\ref{fig:time_series_approximation} shows the time series, the
\pwa trajectories, and simulations from the synthesized \adha.

We summarize further benchmark results in Table~\ref{tab:results}, where
we also vary the precision parameters (\eps and $\delta$).
As expected, decreasing \eps results in bigger \adha since
existing modes can be shared for different \pwa trajectory pieces less often.
In the ECG benchmark we observe that decreasing $\delta$ can result in smaller
\adha since the constructed \pwa trajectories are less diverse, even though
they have more pieces (up to $13$ pieces ($\delta = 0.02$) compared to $7$
pieces ($\delta = 0.05$)).
The run time is mainly influenced by the depth of the exploration tree and
hence the length of the pieces, but we observe that the algorithm never comes
close to exploring the full tree.

\begin{table}
	\centering
	\begin{tabular}{@{\hspace*{1mm}} c @{\hspace*{2mm}} @{\hspace*{2mm}} c @{\hspace*{2mm}} c @{\hspace*{2mm}} c @{\hspace*{2mm}} c @{\hspace*{2mm}} c @{\hspace*{1mm}}}
		\toprule
		\multirow{2}{*}{Model} & \multirow{2}{*}{$\eps$ / $\delta$} & \multirow{2}{*}{run time} & \multirow{2}{*}{$|Q|$} & \multicolumn{2}{c}{\# exploration-tree nodes} \\
		& & & & explored & maximal \\
		\toprule
		\multirow{4}{*}{Heater} & $\eps = 0.1$\phantom{0} & \phantom{0,0}53~s & \phantom{0}2 & \phantom{000,}607 & \phantom{000,000,}$409{,}940$ \\
		 & $\eps = 0.07$ & \phantom{0,0}51~s & \phantom{0}3 & \phantom{000,}606 & \phantom{000,000,}$505{,}948$ \\
		 & $\eps = 0.04$ & \phantom{0,0}63~s & \phantom{0}5 & \phantom{000,}755 & \phantom{000,00}$6{,}176{,}776$ \\
		 & $\eps = 0.01$ & \phantom{0,}162~s & 13 & \phantom{00}$2{,}798$ & \phantom{000,}$731{,}667{,}684$ \\
		 \midrule
		\multirow{4}{*}{Gearbox} & $\eps = 0.1$\phantom{0} & \phantom{0,0}12~s & \phantom{0}4 & \phantom{000,0}40 & \phantom{000,000,00}$8{,}762$ \\
		 & $\eps = 0.07$ & \phantom{0,0}12~s & \phantom{0}5 & \phantom{000,0}46 & \phantom{000,000,0}$13{,}844$ \\
		 & $\eps = 0.04$ & \phantom{0,0}12~s & \phantom{0}6 & \phantom{000,0}51 & \phantom{000,000,0}$17{,}837$ \\
		 & $\eps = 0.01$ & \phantom{0,0}17~s & 10 & \phantom{000,}109 & \phantom{000,000,0}$80{,}216$ \\
		 \midrule
		\multirow{2}{*}{ECG} & $\delta = 0.05$ & \phantom{0,}157~s & \phantom{0}8 & \phantom{000,}185 & \phantom{000,00}$7{,}716{,}800$ \\
		 & $\delta = 0.02$ & $3{,}115$~s & \phantom{0}7 & $101{,}145$ & $721{,}419{,}383{,}211$ \\
		 \bottomrule
	\end{tabular}
	\caption{Benchmark results.
	The second column shows the allowed error $\eps$ between \pwa trajectories and \adha resp.\ the allowed error $\delta$ between time series and \pwa trajectories.
	The last two columns show the number of explored tree nodes resp.\ the total number of possible tree nodes.}
	\label{tab:results}
\end{table}

\section{Conclusion}

We have presented an automatic synthesis algorithm for computing a hybrid
automaton with affine differential dynamics $\H$ from a set of
time series $S$ respectively from a set of piecewise-affine trajectories $F$.
Given precision parameters $\delta$ and \eps, the main feature of our
algorithm is that every time series $s$ in $S$ is $\delta$-captured by some
trajectory $f$ in $F$ and that $\H$ is guaranteed to \eps-capture every
function $f$ in $F$, that is, $\H$ contains an execution that has distance
at most \eps from $f$.
Another feature of our algorithm is that it works online, meaning that the
functions $f$ are processed sequentially and we only modify the
intermediate automaton models.

For future work, hardness of the membership problem for the class of automata
that we considered is open.
We currently do not know if that problem is decidable, and if so, what
complexity is required to solve it exactly.
Another interesting but challenging extension of our work is to allow for
transition switches not at a single time point but in a whole time interval.

\begin{acks}
This research was supported in part by the Austrian Science Fund (FWF) under grant Z211-N23 (Wittgenstein Award) and the European Union's Horizon 2020 research and innovation programme under the Marie Sk{\l}odowska-Curie grant agreement No. 754411.
\end{acks}

\bibliographystyle{ACM-Reference-Format}
\bibliography{bibliography.bib}


\begin{thebibliography}{36}


\ifx \showCODEN    \undefined \def \showCODEN     #1{\unskip}     \fi
\ifx \showDOI      \undefined \def \showDOI       #1{#1}\fi
\ifx \showISBNx    \undefined \def \showISBNx     #1{\unskip}     \fi
\ifx \showISBNxiii \undefined \def \showISBNxiii  #1{\unskip}     \fi
\ifx \showISSN     \undefined \def \showISSN      #1{\unskip}     \fi
\ifx \showLCCN     \undefined \def \showLCCN      #1{\unskip}     \fi
\ifx \shownote     \undefined \def \shownote      #1{#1}          \fi
\ifx \showarticletitle \undefined \def \showarticletitle #1{#1}   \fi
\ifx \showURL      \undefined \def \showURL       {\relax}        \fi
\providecommand\bibfield[2]{#2}
\providecommand\bibinfo[2]{#2}
\providecommand\natexlab[1]{#1}
\providecommand\showeprint[2][]{arXiv:#2}

\bibitem[\protect\citeauthoryear{??}{hys}{2021}]%
        {hysynth}
 \bibinfo{year}{2021}\natexlab{}.
\newblock \bibinfo{title}{{HySynth}}.
\newblock \bibinfo{howpublished}{\url{https://github.com/HySynth/HySynth}}.
\newblock


\bibitem[\protect\citeauthoryear{Alur and Singhania}{Alur and
  Singhania}{2014}]%
        {Alur14}
\bibfield{author}{\bibinfo{person}{Rajeev Alur} {and} \bibinfo{person}{Nimit
  Singhania}.} \bibinfo{year}{2014}\natexlab{}.
\newblock \showarticletitle{Precise Piecewise Affine Models from Input-Output
  Data}. In \bibinfo{booktitle}{\emph{{EMSOFT}}} (New Delhi, India).
  \bibinfo{publisher}{Association for Computing Machinery}, Article
  \bibinfo{articleno}{3}.
\newblock
\showISBNx{9781450330527}


\bibitem[\protect\citeauthoryear{Angluin}{Angluin}{1987}]%
        {Angluin87}
\bibfield{author}{\bibinfo{person}{Dana Angluin}.}
  \bibinfo{year}{1987}\natexlab{}.
\newblock \showarticletitle{Learning Regular Sets from Queries and
  Counterexamples}.
\newblock \bibinfo{journal}{\emph{Inf. Comput.}} \bibinfo{volume}{75},
  \bibinfo{number}{2} (\bibinfo{year}{1987}).
\newblock


\bibitem[\protect\citeauthoryear{Bako and Vidal}{Bako and Vidal}{2008}]%
        {vidal08}
\bibfield{author}{\bibinfo{person}{Laurent Bako} {and}
  \bibinfo{person}{Ren{\'{e}} Vidal}.} \bibinfo{year}{2008}\natexlab{}.
\newblock \showarticletitle{Algebraic Identification of {MIMO} {SARX} Models}.
  In \bibinfo{booktitle}{\emph{{HSCC}}}, Vol.~\bibinfo{volume}{4981}.
  \bibinfo{publisher}{Springer}.
\newblock


\bibitem[\protect\citeauthoryear{Bartocci, Deshmukh, Gigler, Mateis, Nickovic,
  and Qin}{Bartocci et~al\mbox{.}}{2020}]%
        {BartocciDGMNQ20}
\bibfield{author}{\bibinfo{person}{Ezio Bartocci}, \bibinfo{person}{Jyotirmoy
  Deshmukh}, \bibinfo{person}{Felix Gigler}, \bibinfo{person}{Cristinel
  Mateis}, \bibinfo{person}{Dejan Nickovic}, {and} \bibinfo{person}{Xin Qin}.}
  \bibinfo{year}{2020}\natexlab{}.
\newblock \showarticletitle{Mining Shape Expressions From Positive Examples}.
\newblock \bibinfo{journal}{\emph{{IEEE} Trans. Comput. Aided Des. Integr.
  Circuits Syst.}} \bibinfo{volume}{39}, \bibinfo{number}{11}
  (\bibinfo{year}{2020}), \bibinfo{pages}{3809--3820}.
\newblock


\bibitem[\protect\citeauthoryear{Bemporad, Garulli, Paoletti, and
  Vicino}{Bemporad et~al\mbox{.}}{2005}]%
        {BemporadGPV05}
\bibfield{author}{\bibinfo{person}{Alberto Bemporad}, \bibinfo{person}{Andrea
  Garulli}, \bibinfo{person}{Simone Paoletti}, {and} \bibinfo{person}{Antonio
  Vicino}.} \bibinfo{year}{2005}\natexlab{}.
\newblock \showarticletitle{A bounded-error approach to piecewise affine system
  identification}.
\newblock \bibinfo{journal}{\emph{{IEEE} Trans. Automat. Contr.}}
  \bibinfo{volume}{50}, \bibinfo{number}{10} (\bibinfo{year}{2005}).
\newblock


\bibitem[\protect\citeauthoryear{Bogomolov, Forets, Frehse, Potomkin, and
  Schilling}{Bogomolov et~al\mbox{.}}{2019}]%
        {BogomolovFFPS19}
\bibfield{author}{\bibinfo{person}{Sergiy Bogomolov}, \bibinfo{person}{Marcelo
  Forets}, \bibinfo{person}{Goran Frehse}, \bibinfo{person}{Kostiantyn
  Potomkin}, {and} \bibinfo{person}{Christian Schilling}.}
  \bibinfo{year}{2019}\natexlab{}.
\newblock \showarticletitle{JuliaReach: a toolbox for set-based reachability}.
  In \bibinfo{booktitle}{\emph{{HSCC}}}. \bibinfo{publisher}{{ACM}}.
\newblock


\bibitem[\protect\citeauthoryear{Brent}{Brent}{1971}]%
        {Brent71}
\bibfield{author}{\bibinfo{person}{Richard~P. Brent}.}
  \bibinfo{year}{1971}\natexlab{}.
\newblock \showarticletitle{An Algorithm with Guaranteed Convergence for
  Finding a Zero of a Function}.
\newblock \bibinfo{journal}{\emph{Comput. J.}} \bibinfo{volume}{14},
  \bibinfo{number}{4} (\bibinfo{year}{1971}).
\newblock


\bibitem[\protect\citeauthoryear{Ferrari-Trecate and Muselli}{Ferrari-Trecate
  and Muselli}{2003}]%
        {Ferrari03}
\bibfield{author}{\bibinfo{person}{Giancarlo Ferrari-Trecate} {and}
  \bibinfo{person}{Marco Muselli}.} \bibinfo{year}{2003}\natexlab{}.
\newblock \showarticletitle{Single-Linkage Clustering for Optimal
  Classification in Piecewise Affine Regression}.
\newblock \bibinfo{journal}{\emph{IFAC Proceedings Volumes}}
  \bibinfo{volume}{36}, \bibinfo{number}{6} (\bibinfo{year}{2003}).
\newblock
\showISSN{1474-6670}


\bibitem[\protect\citeauthoryear{Ferrari-Trecate, Muselli, Liberati, and
  Morari}{Ferrari-Trecate et~al\mbox{.}}{2001}]%
        {Ferrari01}
\bibfield{author}{\bibinfo{person}{Giancarlo Ferrari-Trecate},
  \bibinfo{person}{Marco Muselli}, \bibinfo{person}{Diego Liberati}, {and}
  \bibinfo{person}{Manfred Morari}.} \bibinfo{year}{2001}\natexlab{}.
\newblock \showarticletitle{A Clustering Technique for the Identification of
  Piecewise Affine systems}. In \bibinfo{booktitle}{\emph{HSCC}}.
  \bibinfo{publisher}{Springer}.
\newblock


\bibitem[\protect\citeauthoryear{Garulli, Paoletti, and Vicino}{Garulli
  et~al\mbox{.}}{2012}]%
        {GarulliPV12}
\bibfield{author}{\bibinfo{person}{Andrea Garulli}, \bibinfo{person}{Simone
  Paoletti}, {and} \bibinfo{person}{Antonio Vicino}.}
  \bibinfo{year}{2012}\natexlab{}.
\newblock \showarticletitle{A survey on switched and piecewise affine system
  identification}.
\newblock \bibinfo{journal}{\emph{IFAC Proceedings Volumes}}
  \bibinfo{volume}{45}, \bibinfo{number}{16} (\bibinfo{year}{2012}).
\newblock


\bibitem[\protect\citeauthoryear{Goldberger, Amaral, Glass, Hausdorff, Ivanov,
  Mark, Mietus, Moody, Peng, and Stanley}{Goldberger et~al\mbox{.}}{2000}]%
        {GoldbergerEtal00}
\bibfield{author}{\bibinfo{person}{Ary~L. Goldberger}, \bibinfo{person}{Luis
  A.~N. Amaral}, \bibinfo{person}{Leon Glass}, \bibinfo{person}{Jeffrey~M.
  Hausdorff}, \bibinfo{person}{Plamen~Ch. Ivanov}, \bibinfo{person}{Roger~G.
  Mark}, \bibinfo{person}{Joseph~E. Mietus}, \bibinfo{person}{George~B. Moody},
  \bibinfo{person}{Chung-Kang Peng}, {and} \bibinfo{person}{H.~Eugene
  Stanley}.} \bibinfo{year}{2000}\natexlab{}.
\newblock \showarticletitle{PhysioBank, PhysioToolkit, and PhysioNet}.
\newblock \bibinfo{journal}{\emph{Circulation}} \bibinfo{volume}{101},
  \bibinfo{number}{23} (\bibinfo{year}{2000}), \bibinfo{pages}{e215--e220}.
\newblock


\bibitem[\protect\citeauthoryear{Guibas, Nguyen, and Zhang}{Guibas
  et~al\mbox{.}}{2003}]%
        {GuibasNZ03}
\bibfield{author}{\bibinfo{person}{Leonidas~J. Guibas},
  \bibinfo{person}{An~Thanh Nguyen}, {and} \bibinfo{person}{Li Zhang}.}
  \bibinfo{year}{2003}\natexlab{}.
\newblock \showarticletitle{Zonotopes as bounding volumes}. In
  \bibinfo{booktitle}{\emph{{SODA}}}. \bibinfo{publisher}{{ACM/SIAM}}.
\newblock


\bibitem[\protect\citeauthoryear{Hainry}{Hainry}{2008}]%
        {Hainry08}
\bibfield{author}{\bibinfo{person}{Emmanuel Hainry}.}
  \bibinfo{year}{2008}\natexlab{}.
\newblock \showarticletitle{Reachability in Linear Dynamical Systems}. In
  \bibinfo{booktitle}{\emph{Logic and Theory of Algorithms}}.
  \bibinfo{publisher}{Springer Berlin Heidelberg}.
\newblock


\bibitem[\protect\citeauthoryear{Hashambhoy and Vidal}{Hashambhoy and
  Vidal}{2005}]%
        {HashambhoyV05}
\bibfield{author}{\bibinfo{person}{Yasmin Hashambhoy} {and}
  \bibinfo{person}{Ren\'{e} Vidal}.} \bibinfo{year}{2005}\natexlab{}.
\newblock \showarticletitle{Recursive identification of switched {ARX} models
  with unknown number of models and unknown orders}. In
  \bibinfo{booktitle}{\emph{{CDC}}}.
\newblock


\bibitem[\protect\citeauthoryear{Henzinger}{Henzinger}{2000}]%
        {henzinger00}
\bibfield{author}{\bibinfo{person}{Thomas~A. Henzinger}.}
  \bibinfo{year}{2000}\natexlab{}.
\newblock \bibinfo{booktitle}{\emph{The Theory of Hybrid Automata}}.
\newblock \bibinfo{publisher}{Springer}.
\newblock


\bibitem[\protect\citeauthoryear{{Juloski}, {Weiland}, and {Heemels}}{{Juloski}
  et~al\mbox{.}}{2005}]%
        {Juloski05}
\bibfield{author}{\bibinfo{person}{A.~L. {Juloski}}, \bibinfo{person}{S.
  {Weiland}}, {and} \bibinfo{person}{W.~P. M.~H. {Heemels}}.}
  \bibinfo{year}{2005}\natexlab{}.
\newblock \showarticletitle{A Bayesian approach to identification of hybrid
  systems}.
\newblock \bibinfo{journal}{\emph{IEEE Trans. Automat. Control}}
  \bibinfo{volume}{50}, \bibinfo{number}{10} (\bibinfo{year}{2005}).
\newblock
\showISSN{1558-2523}


\bibitem[\protect\citeauthoryear{{Kun Huang}, {Wagner}, and {Yi Ma}}{{Kun
  Huang} et~al\mbox{.}}{2004}]%
        {Huang04}
\bibfield{author}{\bibinfo{person}{{Kun Huang}}, \bibinfo{person}{A. {Wagner}},
  {and} \bibinfo{person}{{Yi Ma}}.} \bibinfo{year}{2004}\natexlab{}.
\newblock \showarticletitle{Identification of hybrid linear time-invariant
  systems via subspace embedding and segmentation ({SES})}. In
  \bibinfo{booktitle}{\emph{{CDC}}}, Vol.~\bibinfo{volume}{3}.
\newblock


\bibitem[\protect\citeauthoryear{Lauer, Bloch, and Vidal}{Lauer
  et~al\mbox{.}}{2011}]%
        {Lauer08}
\bibfield{author}{\bibinfo{person}{Fabien Lauer}, \bibinfo{person}{G\'{e}rard
  Bloch}, {and} \bibinfo{person}{Ren\'{e} Vidal}.}
  \bibinfo{year}{2011}\natexlab{}.
\newblock \showarticletitle{A continuous optimization framework for hybrid
  system identification}.
\newblock \bibinfo{journal}{\emph{Automatica}} \bibinfo{volume}{47},
  \bibinfo{number}{3} (\bibinfo{year}{2011}).
\newblock
\showISSN{0005-1098}


\bibitem[\protect\citeauthoryear{Medhat, Ramesh, Bonakdarpour, and
  Fischmeister}{Medhat et~al\mbox{.}}{2015}]%
        {MedhatRBF15}
\bibfield{author}{\bibinfo{person}{Ramy Medhat}, \bibinfo{person}{S. Ramesh},
  \bibinfo{person}{Borzoo Bonakdarpour}, {and} \bibinfo{person}{Sebastian
  Fischmeister}.} \bibinfo{year}{2015}\natexlab{}.
\newblock \showarticletitle{A framework for mining hybrid automata from
  input/output traces}. In \bibinfo{booktitle}{\emph{{EMSOFT}}}.
  \bibinfo{publisher}{{IEEE}}.
\newblock


\bibitem[\protect\citeauthoryear{Mogensen and Riseth}{Mogensen and
  Riseth}{2018}]%
        {mogensen2018optim}
\bibfield{author}{\bibinfo{person}{Patrick~Kofod Mogensen} {and}
  \bibinfo{person}{Asbj{\o}rn~Nilsen Riseth}.} \bibinfo{year}{2018}\natexlab{}.
\newblock \showarticletitle{Optim: A mathematical optimization package for
  {Julia}}.
\newblock \bibinfo{journal}{\emph{Journal of Open Source Software}}
  \bibinfo{volume}{3}, \bibinfo{number}{24} (\bibinfo{year}{2018}).
\newblock


\bibitem[\protect\citeauthoryear{M\"{u}nz and Krebs}{M\"{u}nz and
  Krebs}{2005}]%
        {Munz05}
\bibfield{author}{\bibinfo{person}{Eberhard M\"{u}nz} {and}
  \bibinfo{person}{Volker Krebs}.} \bibinfo{year}{2005}\natexlab{}.
\newblock \showarticletitle{Continuous Optimization Approaches to the
  Identification of Piecewise Affine Systems}.
\newblock \bibinfo{journal}{\emph{IFAC Proceedings Volumes}}
  \bibinfo{volume}{38}, \bibinfo{number}{1} (\bibinfo{year}{2005}).
\newblock


\bibitem[\protect\citeauthoryear{Nakada, Takaba, and Katayama}{Nakada
  et~al\mbox{.}}{2005}]%
        {Nakada05}
\bibfield{author}{\bibinfo{person}{Hayato Nakada}, \bibinfo{person}{Kiyotsugu
  Takaba}, {and} \bibinfo{person}{Tohru Katayama}.}
  \bibinfo{year}{2005}\natexlab{}.
\newblock \showarticletitle{Identification of piecewise affine systems based on
  statistical clustering technique}.
\newblock \bibinfo{journal}{\emph{Automatica}} \bibinfo{volume}{41},
  \bibinfo{number}{5} (\bibinfo{year}{2005}).
\newblock
\showISSN{0005-1098}


\bibitem[\protect\citeauthoryear{Nazari, Rashidi, Zhao, and Huang}{Nazari
  et~al\mbox{.}}{2016}]%
        {Nazari16}
\bibfield{author}{\bibinfo{person}{Sohail Nazari}, \bibinfo{person}{Bahador
  Rashidi}, \bibinfo{person}{Qing Zhao}, {and} \bibinfo{person}{Biao Huang}.}
  \bibinfo{year}{2016}\natexlab{}.
\newblock \showarticletitle{An Iterative Algebraic Geometric Approach for
  Identification of Switched {ARX} Models with Noise}.
\newblock \bibinfo{journal}{\emph{Asian J. Control}} \bibinfo{volume}{18},
  \bibinfo{number}{5} (\bibinfo{year}{2016}).
\newblock
\showISSN{1561-8625}


\bibitem[\protect\citeauthoryear{Ozay}{Ozay}{2016}]%
        {Ozay16}
\bibfield{author}{\bibinfo{person}{Necmiye Ozay}.}
  \bibinfo{year}{2016}\natexlab{}.
\newblock \showarticletitle{An exact and efficient algorithm for segmentation
  of {ARX} models}. In \bibinfo{booktitle}{\emph{{ACC}}}.
  \bibinfo{publisher}{{IEEE}}.
\newblock


\bibitem[\protect\citeauthoryear{{Ozay}, {Lagoa}, and {Sznaier}}{{Ozay}
  et~al\mbox{.}}{2009}]%
        {Ozay09}
\bibfield{author}{\bibinfo{person}{N. {Ozay}}, \bibinfo{person}{C. {Lagoa}},
  {and} \bibinfo{person}{M. {Sznaier}}.} \bibinfo{year}{2009}\natexlab{}.
\newblock \showarticletitle{Robust identification of switched affine systems
  via moments-based convex optimization}. In \bibinfo{booktitle}{\emph{{CDC}}}.
\newblock


\bibitem[\protect\citeauthoryear{Ozay, Lagoa, and Sznaier}{Ozay
  et~al\mbox{.}}{2015}]%
        {OzayLS15}
\bibfield{author}{\bibinfo{person}{Necmiye Ozay},
  \bibinfo{person}{Constantino~M. Lagoa}, {and} \bibinfo{person}{Mario
  Sznaier}.} \bibinfo{year}{2015}\natexlab{}.
\newblock \showarticletitle{Set membership identification of switched linear
  systems with known number of subsystems}.
\newblock \bibinfo{journal}{\emph{Automatica}}  \bibinfo{volume}{51}
  (\bibinfo{year}{2015}).
\newblock


\bibitem[\protect\citeauthoryear{Paoletti, Juloski, Ferrari{-}Trecate, and
  Vidal}{Paoletti et~al\mbox{.}}{2007}]%
        {PaolettiJFV07}
\bibfield{author}{\bibinfo{person}{Simone Paoletti},
  \bibinfo{person}{Aleksandar~Lj. Juloski}, \bibinfo{person}{Giancarlo
  Ferrari{-}Trecate}, {and} \bibinfo{person}{Ren{\'{e}} Vidal}.}
  \bibinfo{year}{2007}\natexlab{}.
\newblock \showarticletitle{Identification of Hybrid Systems: {A} Tutorial}.
\newblock \bibinfo{journal}{\emph{Eur. J. Control}} \bibinfo{volume}{13},
  \bibinfo{number}{2-3} (\bibinfo{year}{2007}).
\newblock


\bibitem[\protect\citeauthoryear{Prabhakar and Soto}{Prabhakar and
  Soto}{2016}]%
        {PrabhakarS16}
\bibfield{author}{\bibinfo{person}{Pavithra Prabhakar} {and}
  \bibinfo{person}{Miriam~Garc\'{i}a Soto}.} \bibinfo{year}{2016}\natexlab{}.
\newblock \showarticletitle{An algorithmic approach to global asymptotic
  stability verification of hybrid systems}. In
  \bibinfo{booktitle}{\emph{{EMSOFT}}}. \bibinfo{publisher}{{ACM}}.
\newblock


\bibitem[\protect\citeauthoryear{Rackauckas and Nie}{Rackauckas and
  Nie}{2017}]%
        {rackauckas2017differentialequations}
\bibfield{author}{\bibinfo{person}{Christopher Rackauckas} {and}
  \bibinfo{person}{Qing Nie}.} \bibinfo{year}{2017}\natexlab{}.
\newblock \showarticletitle{{Differentialequations.jl} - a performant and
  feature-rich ecosystem for solving differential equations in {Julia}}.
\newblock \bibinfo{journal}{\emph{Journal of Open Research Software}}
  \bibinfo{volume}{5}, \bibinfo{number}{1} (\bibinfo{year}{2017}).
\newblock


\bibitem[\protect\citeauthoryear{Roll, Bemporad, and Ljung}{Roll
  et~al\mbox{.}}{2004}]%
        {Roll04}
\bibfield{author}{\bibinfo{person}{Jacob Roll}, \bibinfo{person}{Alberto
  Bemporad}, {and} \bibinfo{person}{Lennart Ljung}.}
  \bibinfo{year}{2004}\natexlab{}.
\newblock \showarticletitle{Identification of Piecewise Affine Systems via
  Mixed-Integer Programming}.
\newblock \bibinfo{journal}{\emph{Automatica}} \bibinfo{volume}{40},
  \bibinfo{number}{1} (\bibinfo{year}{2004}).
\newblock
\showISSN{0005-1098}


\bibitem[\protect\citeauthoryear{Skeppstedt and Jung}{Skeppstedt and
  Jung}{1992}]%
        {SkeppstedtJM92}
\bibfield{author}{\bibinfo{person}{Anders Skeppstedt} {and}
  \bibinfo{person}{Mille Jung, Lennart L. Anders~Millnert}.}
  \bibinfo{year}{1992}\natexlab{}.
\newblock \showarticletitle{Construction of composite models from observed
  data}.
\newblock \bibinfo{journal}{\emph{Int. J. Control}} \bibinfo{volume}{55},
  \bibinfo{number}{1} (\bibinfo{year}{1992}).
\newblock


\bibitem[\protect\citeauthoryear{Soto, Henzinger, Schilling, and Zeleznik}{Soto
  et~al\mbox{.}}{2019}]%
        {SotoHSZ19}
\bibfield{author}{\bibinfo{person}{Miriam~Garc{\'{\i}}a Soto},
  \bibinfo{person}{Thomas~A. Henzinger}, \bibinfo{person}{Christian Schilling},
  {and} \bibinfo{person}{Luka Zeleznik}.} \bibinfo{year}{2019}\natexlab{}.
\newblock \showarticletitle{Membership-Based Synthesis of Linear Hybrid
  Automata}. In \bibinfo{booktitle}{\emph{{CAV}}},
  Vol.~\bibinfo{volume}{11561}. \bibinfo{publisher}{Springer}.
\newblock


\bibitem[\protect\citeauthoryear{{Verdult} and {Verhaegen}}{{Verdult} and
  {Verhaegen}}{2004}]%
        {Verdult04}
\bibfield{author}{\bibinfo{person}{V. {Verdult}} {and} \bibinfo{person}{M.
  {Verhaegen}}.} \bibinfo{year}{2004}\natexlab{}.
\newblock \showarticletitle{Subspace identification of piecewise linear
  systems}. In \bibinfo{booktitle}{\emph{{CDC}}}, Vol.~\bibinfo{volume}{4}.
\newblock


\bibitem[\protect\citeauthoryear{Vidal and Anderson}{Vidal and
  Anderson}{2004}]%
        {VidalA04}
\bibfield{author}{\bibinfo{person}{Ren\'{e} Vidal} {and} \bibinfo{person}{Brian
  D.~O. Anderson}.} \bibinfo{year}{2004}\natexlab{}.
\newblock \showarticletitle{Recursive identification of switched {ARX} hybrid
  models: exponential convergence and persistence of excitation}. In
  \bibinfo{booktitle}{\emph{{CDC}}}, Vol.~\bibinfo{volume}{1}.
\newblock


\bibitem[\protect\citeauthoryear{Willems}{Willems}{1986}]%
        {Willems86a}
\bibfield{author}{\bibinfo{person}{Jan~C. Willems}.}
  \bibinfo{year}{1986}\natexlab{}.
\newblock \showarticletitle{From time series to linear system - Part {II.}
  Exact modelling}.
\newblock \bibinfo{journal}{\emph{Autom.}} \bibinfo{volume}{22},
  \bibinfo{number}{6} (\bibinfo{year}{1986}), \bibinfo{pages}{675--694}.
\newblock


\end{thebibliography}

\end{document}